\newcommand{\ignore}[1]{}
\numberwithin{equation}{section}
\newcommand{\Z}{\mathbb{Z}}
\renewcommand{\paragraph}[1]{\medskip\noindent {\bf {#1}}}
\newcommand{\calA}{\ensuremath{\mathcal{A}}}
\newcommand{\calB}{\ensuremath{\mathcal{B}}}
\newcommand{\calC}{\ensuremath{\mathcal{C}}}
\newcommand{\zo}{\ensuremath{\{0,1\}}} 
  \theoremstyle{plain} \newtheorem{theorem}{Theorem}[section] 
  \newtheorem{lemma}[theorem]{Lemma}
  \theoremstyle{definition}  
  \newtheorem{definition}[theorem]{Definition}  
    \newcommand{\Theorem}[1]{\hyperref[#1]{Theorem~\ref*{#1}}}
    \newcommand{\Lemma}[1]{\hyperref[#1]{Lemma~\ref*{#1}}}
    \newcommand{\Corollary}[1]{\hyperref[#1]{Corollary~\ref*{#1}}}
    \newcommand{\Definition}[1]{\hyperref[#1]{Definition~\ref*{#1}}}
    \newcommand{\Example}[1]{\hyperref[#1]{Example~\ref*{#1}}}
    \newcommand{\Remark}[1]{\hyperref[#1]{Remark~\ref*{#1}}}
    \newcommand{\Fact}[1]{\hyperref[#1]{Fact~\ref*{#1}}}
    \newcommand{\Table}[1]{\hyperref[#1]{Table~\ref*{#1}}}
    \newcommand{\Figure}[1]{\hyperref[#1]{Fig.~\ref*{#1}}}
    \newcommand{\Section}[1]{\hyperref[#1]{Section~\ref*{#1}}}
    \newcommand{\Sections}[1]{\hyperref[#1]{Sections~\ref*{#1}}}
    \newcommand{\Appendix}[1]{\hyperref[#1]{Appendix~\ref*{#1}}}
    \newcommand{\Protocol}[1]{\hyperref[#1]{Protocol~\ref*{#1}}}
    \newcommand{\Equation}[1]{\hyperref[#1]{{(\ref*{#1})}}}
\newcommand{\esm}[1]{\ensuremath{#1}}
\newcommand{\ms}[1]{\esm{\mathsf{#1}}}
\newcommand{\tab}{\hspace*{4mm}}
\newcommand{\etal}{et~al.\xspace}
\definecolor{todocolor}{rgb}{0.8,0,0}
\definecolor{darkgreen}{rgb}{0.09, 0.45, 0.27}
\definecolor{neilcolor}{rgb}{0.7, 0.2, 0.27}
\definecolor{donecolor}{rgb}{0,0,0.8}
\newcommand{\HIDE}[1]{}
\newlist{myitemize}{itemize}{3}
\newlist{myenumerate}{enumerate}{3}
\setlist[myitemize]{label=$\bullet$,topsep=1.0ex,itemsep=0.0ex,leftmargin=1em}
\setlist[myenumerate]{label=(\arabic*),itemsep=0.0em,topsep=0.5ex,leftmargin=1.6em}
\title{Strong Anonymity for Mesh Messaging}
\author{
    Neil Perry\\
    Stanford University
    \and
    Bruce Spang\\
    Stanford University
    \and
    Saba Eskandarian\\
    UNC Chapel Hill
    \and
    Dan Boneh\\
    Stanford University
}
\begin{document}

\maketitle

\begin{abstract}
Messaging systems built on mesh networks consisting of smartphones communicating over Bluetooth have been used by protesters around the world after governments have disrupted Internet connectivity. Unfortunately, existing systems have been shown to be insecure; most concerningly by not adequately hiding metadata. This is further complicated by the fact that wireless communication such as Bluetooth is inherently a broadcasting medium. In this paper, we present a new threat model that captures the security requirements of protesters in this setting. We then provide a solution that satisfies the required security properties, hides all relevant metadata, scales to moderately sized protests, and supports group messaging. This is achieved by broadcasting all messages in a way that limits the overhead of duplicate messages, ensuring that ciphertexts do not leak metadata, and limiting what can be learned by observing user behavior. We also build a model of our system and numerically evaluate it to support our claims and analyze how many users it supports. Finally, we discuss further extensions that remove potential bottlenecks in scaling and support substantially more users.
\end{abstract}

\section{Introduction}

What would a messaging system designed for the safety of protesters look like? Specifically, what would this look like in a setting where access to the Internet has been disabled? In order to disrupt the organization of protests, governments have often disabled Internet access. To stay in contact, protesters in Hong Kong, India, Iran, Nigeria, Thailand, and the US have tried to adopt messaging apps such as FireChat and Bridgefy that form a mesh network through their phones’ Bluetooth antennas as a way to organize and communicate \cite{wakefield_2019, dewast_2014, butcher_2020, goodin_2020, bhushan_2019}, although some have had problems successfully using them \cite{trouble_using}. These apps were originally developed for non-adversarial environments, like a music festival or a flight, allowing users to talk to friends when there is no Internet. Several of them claimed to support end-to-end encryption and to keep their users anonymous; however, attacks have been demonstrated against Bridgefy \cite{bridgefy, breakingbridgefy, eikenberg_albrecht_paterson_2021}. In addition to this, the use cases of protesters are different from those of typical users. Protesters may only have access to low-end devices and not have access to cellular or Wi-Fi networks, but still need to be able to reliably communicate. Protesting can also be dangerous and keeping protesters’ identities private and not leaving records for a government to trace is critical to their safety \cite{joseph_2017, guynn_2016}. Therefore, not only must all communications be kept private, but metadata (i.e. a set of data that describes and gives information about other data) about these communications must not be leaked in order to allow protesters to safely use the system. As Michael Hayden, former director of the CIA and NSA, once said “We kill people based on metadata” \cite{ferran_2014}. There is no existing security definition for this problem.

Without security definitions stating what an adversary is capable of and what guarantees need to be made to protesters using the system, it is impossible to design one that adequately meets their needs. In addition, there is little prior work to build on. Existing anonymous communication systems rely on many tools that are not available in this setting such as centralized servers, public key infrastructures, and devices with relatively high computation and networking resources. The users of this system face the strongest adversaries such as nation-state actors and have the largest privacy requirements.

Since these issues are not addressed by the repurposed apps, this leads to the question, can a secure messaging system be built over a mesh network that provides the needed anonymity for protesters in the face of a nation-state adversary? We propose a security model that meets the needs of protesters in this setting which captures and formalizes these goals. This is useful since these systems are already being used in the real world. We found a solution that meets all of the requirements in our security model. Our system supports 200 users sending messages every 30 seconds under very pessimistic assumptions on how tightly packed a crowd of protesters is and supports more than 3000 users sending messages every minute under more realistic assumptions. Under the right conditions, our system can support more than 5000 users. Messages are delivered within several seconds in small protests and within two minutes in very large protests. Our system accommodates protests of moderate size, providing reliable message delivery in a reasonable amount of time. For very large protests (several tens of thousands of people or more), it is likely that these crowds would not form a fully connected graph and can be thought of as many instances of these moderately sized protests, making our system a potential option for these environments as well.

This is accomplished by broadcasting messages in a non-traditional way to minimize the impact of duplicate messages in the network. Our system hides the content of all messages and relevant metadata such as the intended receiver, timing information, and if users are sending messages by generating cover traffic that completely prevents traffic correlation. We support these results by measuring the performance of our system in models that capture key components of scenarios of realistic protests and outline ways to achieve further scaling.

In summary, we make the following contributions:
\begin{myenumerate}
  \item Define security for an anonymous communication platform running on a mesh network (Section \ref{sec:design})
  \item Propose a solution that meets this security definition (Section \ref{sec:system-description})
  \item Prove the security of our solution (Appendix \ref{sec:security_games})
  \item Build a model of the system (Section \ref{sec:implementation})
  \item Analyze the model and show that it scales to moderately sized protests (Section \ref{sec:eval})
  \item Describe extensions for further scaling (Section \ref{sec:extensions})
 \end{myenumerate}
 
All the code and data for this paper, including the proof of concept code, model, and measurements, are publicly available at \url{https://github.com/NeilAPerry/ms_and_prototype_and_measurements}
\section{Design Goals}
\label{sec:design}


We define a \emph{mesh messaging} system to be a system which allows its users to exchange messages among each other by relying only on their own devices and not on an external cellular/Wi-Fi network. This includes sending messages to users within direct range of a phone's radios as well as others who may only be reachable through multiple hops among adjacent phones. In this section, we describe what additional requirements are needed for one of these systems to allow for anonymous communication against a nation-state adversary.

An adversary that controls some subset of users may try the following attacks:
\begin{myitemize}
    \item Learn with whom a particular user is communicating
    \item Learn information about the communication patterns of users not under the adversary’s control
    \item Impersonate users not under the adversary’s control
    \item Modify a message sent from one user to another
\end{myitemize}


After outlining potential attacks, it is important to consider the security and performance requirements for a mesh messaging system in greater detail. Given an adversary that may attempt any of these attacks, a mesh messaging system that respects privacy must satisfy two broad classes of design goals:
\vspace{-1ex}
\begin{myitemize}
\item \emph{Security requirements.} A private mesh messaging system should reveal no information about a given user's message contents or messaging behavior metadata. 

\item \emph{Performance requirements.} An effective mesh messaging system must ensure timely delivery of messages. This means that messages must be delivered in time proportional to the diameter of the graph. This must hold even as the number of users grows and the network experiences transient disruptions.
\end{myitemize}
\vspace{-1ex}
These security requirements establish a threat model that can be used in any work on messaging over a mesh network when security and privacy are needed. 

\paragraph{The adversary's capabilities.} A mesh messaging protocol for protesters must be resilient to attacks by well-resourced nation-state adversaries. Thus, we consider an attacker who can observe all radio communication in the vicinity where mesh messaging is being used, e.g., the location of a protest. Moreover, we assume the adversary can deploy rogue users who can participate in the messaging protocol and can arbitrarily deviate from the specified protocol. In particular, adversary-controlled users can (i) collude with and communicate with each other, (ii) send messages to each other or other users, and (iii) alter, delay, or drop messages that are routed through them. We exclude Denial of Service attacks. Defending against low-level denial of service attacks, e.g., aimed at disabling cellular radios, is orthogonal to our threat model as we only consider security and performance properties with regard to users who are able to connect to the mesh network. 

\paragraph{Security requirements.} Defending against a nation state adversary requires that the following data not be leaked: 1) content of messages; 2) communication links between users; and, 3) information from message timing. Furthermore, it is imperative that tampering with message contents is detected. The total number of mesh members is not hidden as this will be readily discernible from the overall traffic volume.  Ultimately the system must leak no information about a user’s behavior, except that they connected to the mesh messaging system.

We formalize our security goals by requiring the following two security notions:
\vspace{-1ex}
\begin{myitemize}
    \item \emph{Mesh confidentiality}: an adversary should be unable to learn anything about who users communicate with and what messages they send.
    
    \item \emph{Mesh integrity}: any message successfully delivered in our system must not have been tampered with. Conversely, any message that is tampered with by an adversary must be detected by the recipient.
\end{myitemize}
\vspace{-1ex}

We give formal game-based security definitions for these two notions in Appendix~\ref{sec:security_games}.

\paragraph{Performance requirements.}
Our system must be able to scale to large protests. This means maximizing the \emph{Capacity} of the network. Capacity is defined to be the maximum number of devices that the protocol can support and is dependent on factors such as how close people are in the crowd, how frequently messages and cover traffic is sent, and how frequently message digests used in broadcasting are exchanged between neighbors. These will be analyzed in Section \ref{sec:eval}. This is not necessarily achieved by  minimizing total communication costs, i.e., it is preferred to have a communication scheme that has relatively high average bandwidth usage but stays below a failure threshold than a communication system that has low average bandwidth but frequently crosses the failure threshold. Therefore, we want to minimize some high percentile of bandwidth usage on a device such as the 95th percentile. This failure threshold is defined by the underlying communications infrastructure, i.e., Bluetooth or Wi-Fi Direct. Computation costs must also be kept low enough that they are not the bottleneck on a user's device, i.e. devices can decrypt more incoming messages than the network can support each second. 

Finally, we require successful and timely delivery of messages. That is, whenever the network topology of a mesh network allows for delivery of a message, the message should be delivered, barring malicious interference or the network being over capacity. We capture the notion of a message being deliverable by defining a \emph{good path} through the network, and use this to state a definition of correctness for mesh networks. 
 
\begin{definition}[Good Path]
A set of graphs $G_1=(V,E_1), ..., G_n = (V, E_n)$ on the same vertex set $V$ has a \emph{good path} from node $a\in V$ to node $b\in V$ if a path can be walked from $a$ to $b$ by only using edges in $E_i$ for the $i$th edge in the path. We allow the walk to ``pause'' or ``wait'' by taking no edge (equivalent to taking a self edge) in the $i$th round.
\end{definition}
 
\begin{definition}[Correctness, informal]
Let the sequence of graphs $G_1,...,G_n$ on the same vertex set $V$ represent the state of a mesh network over time. We say that a mesh messaging scheme has \emph{correctness} if for any users $A,B\in V$ and for any sequence of graphs $G_i$ on vertex set $V$, messages from $A$ to $B$ are eventually delivered if there is a good path from $A$ to $B$ in $G_1,...,G_n$ and all nodes follow the protocol.
\end{definition}

\ignore{

\begin{definition}[Mesh messaging ideal functionality]
\ignore{
The functionality interacts with $N$ clients $C_1, ..., C_N$, some of which are honest and the rest are controlled by the adversary. Each user has a list of honest \emph{friends} $F_i\subseteq \{C_i, i\in 1...N \}$. The functionality proceeds in a series of rounds. In each round, each honest user $C_i$ receives as input either a message and recipient pair $(C_j, m_i)$ where $C_j\in F_i$ or an empty message $\bot$. The adversary additionally receives an input graph $G$ whose vertices are the users $C_1, ..., C_N$ with arbitrary edges between them. 
}
\end{definition}
}

\ignore{
\subsection{Attacker capabilities}
\begin{enumerate}
  \item control up to x\% of nodes in the network
  \begin{myitemize}
     \item all nodes can collude over some other channel (or this system)
     \item has private keys for these nodes
     \item can accurately record timestamps of when messages passed through nodes they control
   \end{myitemize}
  \item send arbitrary messages from nodes they control to arbitrary PKs
  \item alter messages that it receives to forward
  \item refuse to forward messages (drop them)
  \item record all traffic on network (global view)
  \item delay messages
  \item attacker is limited to probabilistic polynomial time algorithms (PPT)
  \item static corruption
\end{enumerate}

\subsection{Outside the scope of our threat model}
\begin{enumerate}
  \item Hiding number of users
  \item Hiding who is using the system
 \end{enumerate}
 
 \subsection{Privacy/Anonymity Properties}
 These properties should hold for all honest users
 \textcolor{red}{TODO: Make these more formal?}
 
 \subsubsection{Real/Cover Indistinguishability}
 \begin{myitemize}
	\item Real messages sent or forwarded by someone are indistinguishable from fake messages sent or forwarded by someone
	\item An adversary cannot tell when a sender sends a real message or how many real messages have been sent
	\item This provides something similar to sender anonymity (if the adversary controls enough nodes around the sender and can link a sender to a message, the adversary learns nothing since fake messages are being sent out at a fixed rate and are replaced by real messages in an indistinguishable way)
\end{myitemize}

\subsubsection{Traffic Correlation Irrelevance}
 \begin{myitemize}
	\item an adversary can not learn anything from traffic patterns in the network that it could not learn just by knowing the topology of the graph
\end{myitemize}

\subsubsection{Receiver Anonymity}
\begin{myitemize}
	\item messages reveal no information about the receiver
\end{myitemize}

\subsubsection{Relationship Anonymity}
\begin{myitemize}
	\item An adversary cannot identify the (sender, receiver) pair of any message with probability higher than any other pair of nodes in the network \textcolor{red}{TODO: This doesn't hold? With a global view, an adversary can tell that a message originated from a particular sender by surrounding them and seeing that this CT is no where else in the network. But the adversary can't tell who the receiver is or if its a real message so this doesn't matter. Should this definition be removed?}
\end{myitemize}

\subsubsection{Message Confidentiality}
\begin{myitemize}
	\item Only the intended sender and receiver learn the contents of the encrypted message
\end{myitemize}

\subsubsection{Message Integrity}
\begin{myitemize}
	\item receiver only accepts a message m if m has not been tampered with
\end{myitemize}

\subsubsection{Message Authenticity/Unforgeability}
\begin{myitemize}
	\item When a message m is sent to receiver R and the message claims that it is from sender S, R only accepts the message if it is from S
\end{myitemize}
}

\section{Anonymous Mesh Messaging}
\label{sec:system-description}

In this section, we discuss how the protocol conceptually works. Then, we discuss specific aspects of the design such as key distribution, the main control loop of the protocol, and show how group messaging is supported. Finally, we show why this design meets our security requirements.

\subsection{Protocol Overview}
The protocol can be run over various underlying network mediums such as Bluetooth or Wi-Fi Direct. It has 3 separate phases: \emph{Key Distribution}, \emph{Messaging}, and \emph{Distancing}. These three phases take place before the protest, during the protest, and after the protest. They work as follows:
\vspace{-1ex}
\begin{myitemize}
	\item \emph{Key Distribution Phase}: Users exchange public keys and add each other to their friends lists. This process is done in two phases and is described in Section \ref{sec:key_distribution}.
	\item \emph{Messaging Phase}: Users can message other users in their friends list.
	\item \emph{Distancing Phase}: Users change their Bluetooth address and delete all data from the protest to unlink themselves from traffic that may have been recorded.
\end{myitemize}
\vspace{-1ex}
 Our primary focus is the Messaging phase of the protocol; we will describe it in detail in Section \ref{sec:active_phase}.

During the protest, every device broadcasts a fixed-size dummy message at a fixed interval. If a device has a real message to send, it encrypts the message in a way that hides the identity of the receiver. In order to hide the identity of the receiver, ensure message integrity, and ensure that users cannot impersonate one another, a secure signcryption with the anonymity property that nothing about public keys can be learned from ciphertexts must be used. Appendix \ref{sec:public_key_anonymity} further explains this anonymity property and a specific implementation of the cipher is given in Appendix \ref{sec:signcryption}. Then, the device waits until it is time to send a dummy message and sends this real message instead by broadcasting it to all of its neighbors. Devices send messages using a reliable broadcast algorithm described in Section \ref{broadcasting_algo}, which aims to eventually deliver every message to all other devices.
When a device receives a message, it checks if the message is for itself by attempting to decrypt it and processes the message if it is the intended receiver. It is important to do this in a way that does not leak information via timing side channels. The risk is that a device does more work when a decryption succeeds, which can be observed by the device taking longer to respond to messages, e.g., as recently demonstrated against the Zcash cryptocurrency~\cite{TBP20}.

\subsection{Broadcasting Algorithm} \label{broadcasting_algo}
Intuitively, information is revealed about a receiver whenever someone does not receive information. Traditional unicasting (i.e. one sender and one receiver), leaks exactly who is communicating with one another. The unicast routing algorithms that scale to large mesh networks also rely on the sender announcing that they want to talk to the receiver to build routes on demand \cite{venkat_mohan_kasiviswanath_2011} which breaks our security definition. We discuss unicasting alternatives and why they fail in Appendix \ref{unicast_attacks}. Therefore, we must broadcast messages.

First, we discuss a na\"ive approach to broadcasting that we call \emph{Simple Broadcasting} and its shortcomings before introducing a more sophisticated technique. A natural way to broadcast a message is to have a device send it to all of its neighbors and whenever it receives a message, forward it to all of its neighbors. Unfortunately, this results in an unreasonably high number of duplicate message transmissions which is quadratic in the number of messages sent. Given the low bandwidth of our network, another technique is needed.

Now, we describe the \emph{Smart Broadcasting} algorithm we use to deliver messages to all devices. This algorithm is based on the anti-entropy phase of \cite{BHO+1999}. In the algorithm, devices store a list of messages and a digest of these messages. The algorithm works by periodically having devices share a digest with their neighbors. We refer to the digest as the \ms{DS} in our protocol, representing a data structure that indicates which messages have been received or generated. The digest is a bitstring where the $i$th bit is 1 when a message hashes to the $i$th index of a hash table and 0 otherwise. If a device receives a digest that contains messages that it has not received, it asks the sender of the digest to send the missing messages. We refer to this as a \ms{request}. Smart Broadcasting scales with the number of nodes in the network: avoiding the quadratic bandwidth blowup of Simple Broadcasting.

\subsection{Key Distribution} \label{sec:key_distribution}

Users in our system need a way of obtaining and verifying others' keys. In this section, we propose a two phase scheme for before and after Internet access has been disabled. The first phase uses an efficient Public Key Infrastructure (PKI) accessed through the Internet and the second phase uses a scalable mechanism for in-person key exchange after Internet access has been removed. Regardless of which phase was used to exchange keys, there is no way for an untrusted adversary to know which users' public keys were used during a protest. Implementations must defend against side channel attacks that may leak the recipient in order for this guarantee to hold. Therefore, these keys only need to be exchanged once among friends and can then be used at all subsequent protests.

In the first phase, a natural idea is to use a central trusted server like Keybase \cite{keybase}. This conveys the traditional advantages of users being able to quickly add many other users; but also has many downsides in this setting. Having a public directory of users (even if pseudonyms are used) provides a record that may link actual user identities to protests. This does not meet our setting's privacy requirements, requiring us to use a PKI such as CONIKS \cite{coniks}. CONIKS can hide the list of identities using the system while providing key verification for end-to-end encrypted communication systems. Since these solutions fail after Internet access is lost, we need an alternative mechanism to switch to when there is no Internet access.

Needing a solution that does not rely on any existing infrastructure, including the Internet and PKIs, we propose the following key distribution system. Users have no choice but to initially share keys in person, e.g., by scanning QR codes on each others' devices via their camera. However, once users have added others to their contacts, we accelerate the key distribution process by augmenting in-person key sharing with \emph{introductions}, where two users $A$ and $B$ who have not met in person but share a common (trusted) contact $C$ can be introduced to each other by having $C$ send each user a message containing the other's public key. Introductions bootstrap new contacts by re-using the mesh messaging network and existing contacts. This can be thought of as a ``web of trust'' (WoT) approach to offline key distribution, similar to the model adopted by the Briar messaging app~\cite{briar}. Once $A$ and $B$ have each other's public keys, they can initiate introductions with their other contacts. Since individuals wishing to speak to each other at a protest are likely to have met before or to have mutual contacts, introductions can facilitate quick exchange of keys among a large group of users. One potential problem is a malicious user making an incorrect introduction by giving the wrong public key to a user. Since there is no PKI to verify that the correct key has been given, we rely on the WoT. Since only friends can make introductions, introductions are only made by trusted users, i.e. users that shared keys from phase 1, users that shared keys in person, and users that are trusted through the WoT.

To prevent malicious users from sending messages to honest users, the recipient's public key must be known to send them a message. This means that a user that does not know another user's public key must guess, which will be successful with negligible probability. Therefore, malicious users that are not friends with the user requiring an introduction cannot make the introduction. If a user's friend is malicious, the user is vulnerable to this type of attack. So, we allow users to only let certain friends make introductions. Users choosing to trust malicious devices is outside the scope of our threat model. In order to prevent people's behavior changing in response to adversarial messages, we also require that users' devices drop messages intended for them from users not in their friends list, similar to Orca \cite{Orca}. This problem can further be reduced by only allowing introductions through a limited number of hops in the WoT.

Users physically meeting in person before protests begin seems like it has many drawbacks in most settings. This does not allow users to add each other from a distance without having a common connection, requires both users to be available at the same time, and does not take advantage of the existing communications infrastructure that people have access to. In our anonymous mesh setting however, access to the existing communications infrastructure has been removed. This type of key exchange has a number of desirable properties and these drawbacks are less relevant.

We assume that users are attending the same protest and can therefore meet in person. This application is also meant to be used by people that are part of a group that plans to go to a protest. Therefore, all people that a user would message are known ahead of time and they are comfortable meeting in person. By meeting in person and not connecting to services that leave evidence of a user's involvement, users have plausible deniability that they were involved in the protest. Finally, this key exchange method works after Internet access has been cut off. A government may decide to preemptively do this to discourage protesting in the first place, but groups could still self organize, exchange keys, and join larger existing protests even after this occurs. It is also reasonable to assume that the number of people in one of these groups is not too large making this feasible to do.

\subsection{Messaging Phase} \label{sec:active_phase}
Here we describe the Messaging phase using details of the smart broadcasting algorithm discussed in Section \ref{broadcasting_algo}. Note that although we make use of a preparation phase to establish relationships between users, the Messaging phase algorithms we introduce are equally applicable to a system that establishes relationships in another way. This phase can be thought of as a control loop that repeats until the phase ends. An abstract version of this phase using any broadcasting algorithm that fulfills certain requirements is shown in Appendix \ref{sec:active_protesting_appendix}. Pseudo code showing this control loop and related functions is presented in Figure \ref{fig:algorithm}. \ms{PK_s} is the device's public key, \ms{SK_s} is the device's secret key, \ms{PK_r} is the intended receiver's public key, and \ms{PK_{dummy}} is a dummy public key. Each device is independently assigned a uniformly random value indicating when the device is allowed to send a message called \ms{turn}. Devices send messages every \ms{send\_rate} ms from \ms{turn} and all messages are padded to the same length.

	

\begin{figure}
	\fbox{
		\begin{minipage}[t]{0.45\textwidth}
			\input{code/join}
		\end{minipage}
	}
	\fbox{
		\begin{minipage}[t]{0.45\textwidth}
			\input{code/send}
		\end{minipage}
	}
	\fbox{
		\begin{minipage}[t]{0.45\textwidth}
			\input{code/receive} 
		\end{minipage}
	}
	\begin{subfigure}[b]{\textwidth}
	\fbox{
		\begin{minipage}[t]{0.45\textwidth}
			\input{code/main}
		\end{minipage}
	}
    \label{fig:main}
	\end{subfigure}
\caption{Main control loop and related functions with details of smart broadcasting.}
\label{fig:algorithm}
\vspace{-2em}
\end{figure}

This control loop consists of checking a series of conditions. Let \ms{DS} be the data structure which stores a record of which messages a user has with corresponding timestamps and having the methods \ms{add} and \ms{remove} to add and remove entries respectively. This is the digest described in \emph{Smart Broadcasting}. \ms{HAS\_SEND}, \ms{OUTSTANDING\_REQUESTS}, \ms{DS\_ARRIVED}, and \ms{MESSAGE\_ARRIVED} are boolean values that are \ms{True} when a device has a message to send, new requests have arrived, a new \ms{DS} has been received, and a new message or request has been received respectively. Otherwise, they are \ms{False}.

First, a device checks if it should send a message by seeing if a multiple of \ms{send\_rate} has passed since \ms{turn}. If it is time to send a message, the device then checks \ms{HAS\_SEND} to see if it should send a real message or a dummy message. If the device sends a message, it adds it to its \ms{DS} and stores the message.

Second, a device checks if the time since \ms{turn} is a multiple of \ms{DS{share}}, the interval at which a device shares its \ms{DS}. If true, the device broadcasts \ms{DS_{compact}}, which is the same as \ms{DS} but contains no timestamps.

Third, if \ms{DS\_ARRIVED} is true, the device saves a list of message ids that the incoming \ms{DS} has that it does not have. This is done by calling \ms{diff}, which is a function that takes in two \ms{DS}s \ms{(DS_1, DS_2)} and returns a list of messages that are in \ms{DS_1} but are not in \ms{DS_2}. The device then broadcasts a request for the messages corresponding to this list of ids.

Fourth, if \ms{OUTSTANDING\_REQUESTS} is true, the device responds to all outstanding requests that it has received from other devices. This is done by taking the union of all lists of requests from other devices, and then broadcasting each corresponding message that the device has in its \ms{DS}. It is assumed that the device has all of the corresponding messages. If the lookup fails due to the message having expired or due to a malicious request, $\varnothing$ is returned from \ms{DS} and will not be broadcast.

Fifth, if \ms{MESSAGE\_ARRIVED} is true, the function \ms{receive} is called. This functions checks if the incoming message is in \ms{DS}. If it is not, it adds it to \ms{DS}, stores it, and tries to decrypt it. If the decryption succeeds, the message and sender are returned.

Finally, the device removes all expired entries from \ms{DS} and their corresponding messages. This is done by checking if the message is older than \ms{time\_to\_keep}, which is a constant value indicating the amount of time that devices store received messages. This is done to prevent the device from having to store irrelevant messages that have already been received by all devices.

This completes the description of the control loop. When a real or dummy message is sent, \ms{send} is called and it encrypts the message, adds the message to its own \ms{DS}, and broadcasts the message. The encryption function is called \ms{E} and the decryption function is called \ms{D}. When a message, \ms{DS}, or request is sent, the function \ms{broadcast} is called and it sends its input message to all of the device's neighbors. The function \ms{join} is called when a device joins the network and is responsible for initializing \ms{turn} and calling the function \ms{init} which creates an empty \ms{DS}. Note that none of these variables need to be synchonized across devices. In fact, having \ms{turn} synchronized across devices would lead to the worst case traffic patterns (see Section \ref{sec:how_much_bandwidth}).

\subsection{Group Messaging}

Group messaging is an important feature in private messaging and has often been studied in the context where there are centralized servers \cite{alwen_2020, o'leary_2019, chase_2020}. We support group messaging in our protocol by having everyone in a group share a secret key. This key can be shared ahead of time, i.e., everyone in the group decides on a shared key in person before the protest begins, or a group can be built during the protest. In order to build a group after the protest has started, one person can choose a secret key $k$ and send a message to each person in the group containing a special flag to indicate the formation of a group, the secret key for the group, and optionally a list of group members. For a group of size $N$, this would require sending $N$ messages, taking time \ms{send\_rate} * $N$ to establish the group. To speed this up, one can add a list of group members to invite to the group setup message. For example, for Users $A$, $B$, $C$, and $D$, $A$ invites $B$ and also asks $B$ to invite $C$ and $D$. In general, $A$ can invite a small list of users and have each user invite separate lists of other users. In order for users to not be able to impersonate other users in a group, the plaintexts must begin with the sender's signature to prove their identity. Therefore, group messaging can easily be enabled at the cost of setup time and a reduced effective message size.

\subsection{Security Analysis}
We now informally describe why this system meets both our mesh integrity and confidentiality definitions. Appendix \ref{sec:security_games} contains formal proofs of the following theorems.

\begin{theorem}\label{thm:int}
Assuming signcryption scheme $(G,E,D)$ has ciphertext integrity, $\Pi$ has mesh integrity.
\end{theorem}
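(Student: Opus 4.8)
The plan is to prove the contrapositive by a black-box reduction: from any PPT adversary $\mathcal{A}$ winning the mesh-integrity game against $\Pi$ with non-negligible advantage, I construct a PPT adversary $\mathcal{B}$ that breaks the ciphertext integrity of $(G,E,D)$ with related advantage. The guiding observation is that an honest receiver accepts a message in $\Pi$ only through a successful decryption returning a non-$\bot$ value (the \ms{receive} routine in Figure~\ref{fig:algorithm}). A mesh-integrity break is therefore an event in which the receiver accepts a ciphertext that the corresponding honest sender never produced for it, and such a ciphertext is precisely a ciphertext-integrity forgery.

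First, $\mathcal{B}$ simulates the mesh-integrity challenger toward $\mathcal{A}$. It generates key pairs for all honest users, embedding its own ciphertext-integrity challenge keys as the keys of the honest sender/receiver pair whose traffic will be tampered with, and hands $\mathcal{A}$ the adversarially controlled users together with all public keys. Whenever an honest sender is instructed to transmit a real or dummy message to the embedded receiver, $\mathcal{B}$ obtains the ciphertext by querying its signcryption oracle instead of running $E$ locally, and records the returned ciphertext in a set $S$. Because the control loop manipulates ciphertexts only through $E$, $D$, the digest \ms{DS}, and \ms{broadcast}, and because the adversary's powers (injecting, delaying, reordering, dropping, and modifying messages at controlled nodes) act on the transcript rather than on the secret key, $\mathcal{B}$ can faithfully reproduce the entire execution without ever knowing the embedded receiver's secret key.

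The extraction step is the crux. When $\mathcal{A}$ triggers the mesh-integrity winning event --- the embedded honest receiver accepts some message by decrypting a ciphertext $\ms{CT}^*$ to a non-$\bot$ plaintext --- $\mathcal{B}$ outputs $\ms{CT}^*$ as its forgery. I claim $\ms{CT}^* \notin S$ whenever the event is a genuine tampering: if $\ms{CT}^*$ had been produced by the honest sender and recorded in $S$, then by correctness of the signcryption scheme it would designcrypt to exactly the plaintext the sender intended, so the delivered message would be untampered and the winning event would not fire. Hence $\ms{CT}^* \notin S$, which is exactly the condition for $\mathcal{B}$ to win the ciphertext-integrity game, and $\mathcal{B}$'s advantage is at least that of $\mathcal{A}$ up to a negligible additive term from decryption error.

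The main obstacle I anticipate is the gap between the single-key ciphertext-integrity definition and the many honest key pairs present in the mesh, together with $\mathcal{A}$'s ability to choose its target adaptively. If the signcryption notion in the appendix is already phrased in the multi-user setting, this is immediate; otherwise I would use a standard guessing reduction, having $\mathcal{B}$ guess in advance the honest sender/receiver pair that $\mathcal{A}$ attacks, embed the challenge keys there, and abort on a wrong guess. Since the number of honest pairs is polynomial, this loses only a polynomial factor and preserves non-negligibility. A minor point worth checking is that dummy traffic does not disturb the simulation, but since dummy ciphertexts are ordinary signcryptions to $\ms{PK_{dummy}}$ obtained from the same oracle, they are handled identically and cause no difficulty.
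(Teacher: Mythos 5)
Your proposal takes essentially the same route as the paper's proof: a reduction to ciphertext integrity in which $\mathcal{B}$ guesses the target honest sender/receiver pair in advance (the paper draws $(i,j)$ uniformly from $H^2$, paying exactly the $1/h^2$ polynomial factor you describe), embeds the challenge keys at that pair, obtains that pair's ciphertexts from the encryption oracle, and submits the accepted-but-never-sent ciphertext as its forgery, arguing $\ms{CT}^* \notin S$ by correctness of the signcryption just as you do.

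One step you assert too quickly is that $\mathcal{B}$ can ``faithfully reproduce the entire execution without ever knowing the embedded receiver's secret key.'' The outward-facing transcript indeed does not depend on decryption results (devices broadcast and forward identically whether or not a decryption succeeds), but the winning event itself does: to maintain the set $R$ and to identify which delivered ciphertext $\ms{CT}^*$ the embedded receiver accepted --- including packets injected by malicious nodes and addressed to $i$ or $j$ --- $\mathcal{B}$ must be able to test whether those ciphertexts decrypt to non-$\bot$ under keys it does not hold. The paper resolves this by issuing decryption queries to $\ms{Chal}_\ms{int}$, which the Boneh--Shoup ciphertext-integrity game provides; your reduction needs the same oracle calls (or, alternatively, an additional polynomial-loss guess over the ciphertexts delivered to the embedded receiver). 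With that addition, your argument coincides with the paper's.
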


\begin{theorem}\label{thm:conf}
Assuming signcryption scheme $(G,E,D)$ is a CPA secure cipher and provides public key anonymity, then $\Pi$ has mesh confidentiality.
\end{theorem}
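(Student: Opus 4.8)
The plan is to prove mesh confidentiality by a hybrid argument that reduces, step by step, to the two assumed properties of the signcryption scheme. First I would unpack the formal game from the appendix, which I expect to be a left-or-right indistinguishability experiment: a PPT adversary controls a subset of corrupted nodes, observes the entire radio transcript (every broadcast ciphertext, every \ms{DS} digest, every \ms{request}, and all timing), and submits two full specifications of honest-user behavior---a per-round list of (sender, receiver, message) tuples, including whether each honest user sends a real message or stays dormant. The challenger runs $\Pi$ on one specification selected by a secret bit $b$, and the adversary must guess $b$; mesh confidentiality requires its advantage to be negligible.

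The central observation, which I would isolate as a preliminary lemma, is that the protocol's transmission schedule is data-independent. By construction every device broadcasts one fixed-length packet every \ms{send\_rate} milliseconds from its scenario-independent \ms{turn}, whether or not it has a real message, and digests and requests are emitted on their own fixed schedules. Consequently the only part of the transcript that depends on $b$ is the bit-content of the ciphertexts broadcast by honest users; all packet counts, timing, and the combinatorial structure of digest exchange depend on those ciphertexts only as opaque bitstrings (the $i$th digest bit is fixed by hashing the ciphertext, not its plaintext or recipient). After this lemma, the remainder of the proof only needs to establish indistinguishability of the two honest-ciphertext sequences.

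Next I would set up the hybrid. Enumerate the honest ciphertexts produced over all rounds as $c_1,\dots,c_q$ with $q=\poly$, and let hybrid $H_j$ generate $c_1,\dots,c_j$ as in the $b=1$ world and $c_{j+1},\dots,c_q$ as in the $b=0$ world, so that $H_0$ and $H_q$ are the two real experiments. Consecutive hybrids differ in a single ciphertext, which changes from $E(\mathsf{PK}_r^0,m^0)$ to $E(\mathsf{PK}_r^1,m^1)$, with a dummy send being the special case $(\mathsf{PK}_{\mathsf{dummy}},0)$. I would bound $\abs{\Pr[H_{j-1}]-\Pr[H_j]}$ by a two-step sub-reduction: CPA security to replace $E(\mathsf{PK}_r^0,m^0)$ by $E(\mathsf{PK}_r^0,m^1)$, then public key anonymity to replace that by $E(\mathsf{PK}_r^1,m^1)$. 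The reduction embeds its received challenge ciphertext as $c_j$ and simulates the entire mesh around it; summing over the $q$ steps gives a negligible total advantage.

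The hard part will be arguing that the reduction faithfully simulates the honest nodes without holding the secret key of the challenge recipient. The difficulty arises only if the game exposes honest recipients' received-message outputs to the adversary, since an honest recipient attempts to decrypt every ciphertext it sees, including $c_j$. I would resolve this by exploiting that both behavior specifications are chosen by the adversary in full, so the reduction knows a priori which honest user is meant to receive $c_j$ and with what plaintext, and can hardcode that output rather than decrypt; for every other honest ciphertext the reduction holds the relevant keys itself, since the signcryption challenger withholds only the single challenge key. The timing-side-channel-free decryption required in \Section{sec:active_phase} is what makes this hardcoding sound, as it guarantees an honest node's observable behavior is independent of the secret decryption path. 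I would also check the adaptive cases where a corrupted node injects or replays a ciphertext colliding with $c_j$ in a digest, arguing they are handled identically across adjacent hybrids because honest nodes drop messages from non-friends and, by \Theorem{thm:int}, any modification is detected rather than silently accepted.
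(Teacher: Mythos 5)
Your proposal is correct, and it rests on the same two pillars as the paper's proof --- a hybrid argument reducing to CPA security and public-key anonymity, plus the observation that the transcript is data-independent except for ciphertext contents (the paper states this inline in its anonymity lemma: since honest nodes send fixed-length packets at fixed intervals and every ciphertext is delivered to every node given $G_i$, changing a recipient changes nothing observable). However, your decomposition is genuinely different. You use a per-ciphertext hybrid: $q$ steps, each bounded by a CPA sub-step that swaps $E(\ms{PK}_r^0, m^0)$ for $E(\ms{PK}_r^0, m^1)$ followed by an anonymity sub-step swapping the key under the now-common plaintext $m^1$. The paper instead uses three global hybrids: first all honest-to-honest plaintexts are replaced by $0^\ell$ (proved by CPA security via \emph{inner} hybrids over the $h$ honest recipients), then all recipients are switched from $\ms{rec}_0$ to $\ms{rec}_1$ in a single anonymity reduction (its anonymity definition explicitly permits polynomially many queries, and all queries now carry the identical plaintext $0^\ell$), and finally the $0^\ell$ placeholders are restored to the real $b=1$ messages by a second CPA argument. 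The paper's $0^\ell$ intermediary is exactly the trick that lets one anonymity reduction handle every message at once; you achieve the same effect locally by aligning plaintexts before switching keys, at the cost of a looseness factor of $q$ (ciphertexts) rather than the paper's $h$ (honest users) per CPA lemma. Your handling of the "decryption without the secret key" issue is also sound and matches the paper's resolution --- the paper's reduction simply skips decryption for the challenge user, noting it is invisible to the adversary, while your hardcoding argument is slightly more general in that it would survive even if received-message outputs were exposed.
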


\textbf{Integrity.} If a message is delivered to a device and it decrypts correctly, this means that it was intended for this device and that the message was not tampered with. If it had been tampered with, it would have decrypted to $\bot$ and been discarded due to the guarantees of the signcryption algorithm being used. Thus, our protocol achieves message integrity through the properties of the underlying cipher scheme.

\textbf{Confidentiality.} The actions of a device fall into one of two cases. The first case is the generation of new messages and the consumption of messages from other devices. To an outside observer, the device behaves the same regardless of who it interacts with, what messages it receives, and if it sends a new message. Messages will be generated at a constant rate, where dummy messages are indistinguishable from real messages, and the device will behave the same whether it receives any messages intended for itself or not. In particular, the CPA-security of the signcryption scheme hides message contents, and the anonymity property ensures that each ciphertext reveals nothing about its intended recipient. 

The second case is when a device is exchanging \ms{DS}s with its neighbors and responding to message requests. These actions depend on the behavior of other devices in the protocol. No meaningful metadata can be learned from this process since the encryptions of all messages are considered public. Where a message originated from can be learned by controlling all neighbors of a device and observing a new message from the target device that the devices under your control had not learned about. This is ok since all devices are frequently generating messages whether they are communicating or not, meaning that nothing is learned from this traffic analysis. Which messages a device has learned about can also be observed, but again there is no meaningful information in this as all devices will eventually learn about all messages. When a device receives a particular message tells the attacker nothing about if or who a user is communicating with and what the messages contain.

\section{Implementation}
\label{sec:implementation}

\subsection{Underlying Network}
Our system requires a protocol which allows phones to broadcast messages in a nearby area. Starting in Bluetooth 5.0, a device can broadcast short 255-byte advertisements as part of the pairing process, which are received by nearby devices \cite{collotta_2018}. Our system limits the size of user messages, similar to SMS, to fit within these advertisements. While the theoretical maximum throughput is 2 Mb/s, a reasonable expectation of range is 10m with a throughput of approximately 1.4 Mb/s \cite{collotta_2018}. When obstacles are in the way, this range and throughput may be reduced \cite{yaakop_2017}. In an indoor setting, throughput was reduced to roughly 75\% of its maximum measured throughput (when the devices are right next to each other). This occurred after passing through two walls \cite{collotta_2018} and when using Bluetooth Low Energy \cite{zhang_2020}. Without obstacles, throughput stayed at its maximum past 10m \cite{bocker_2017}. In our analysis we will assume devices have a maximum range of 10 meters (30 ft.), a maximum bandwidth of 1.4 Mb/s, and send messages of length 255 bytes (including headers).

While our analysis will focus on Bluetooth due to its ubiquitous adoption, we note that alternatively using Wi-Fi Direct instead of Bluetooth can dramatically increase our system's capacity and throughput, given that there are measurements showing speeds of several 10s of Mb/s \cite{adam_2020}. Wi-Fi Direct has some similarities to Bluetooth. A device can advertise what services it supports and others can join the network (similar to the advertising and pairing process). Each packet would have around 1500 bytes depending on the MTU, which is significantly larger than the packet constraints of Bluetooth. Using Wi-Fi direct as the underlying network infrastructure instead of Bluetooth would allow the system to scale to much higher levels while also supporting much longer/more frequent messages.

\subsection{Proof of Concept and Model}

To evaluate the performance of our approach on smartphones, we built a proof of concept for Android. We used a 1st generation Google Pixel and a Samsung Galaxy S5 Active. This app encrypts a message and sends it to another phone via a Bluetooth advertisement where it is then decrypted. We also measured the time that decryption operations take on the 1st generation Google Pixel and a 4th generation Google Pixel.

In assessing the scalability of our approach to large protests, we built a model of the devices in the network communicating and running our protocol. This model is a simulation of the network, but we refer to it as a model to avoid confusion with the term simulation also being used in a different context for proofs of security. The traffic patterns that arise as a result of the cover traffic are predictable, as they depend only on how many devices there are, where they are placed relative to each other, and the public parameters. By recording and logging this traffic, average device bandwidth, message delivery rates, how many devices the protocol can support, and other information can be analyzed.

The model breaks up time into discrete intervals and runs the protocol for each device in each time interval. The interactions between these devices are recorded and the internal state of each device is kept track of. For a more detailed description, see Appendix \ref{sec:sim_details}.

\subsection{Limitations}

We represent a crowd of protesters by placing people in a grid parameterized by the space between neighbors and using this as the underlying graph where an edge exists between any two people that are within the range of Bluetooth. This allows us to investigate the effect of varying how densely packed a crowd is on performance. We do not have more accurate data on how people move during protests, but it is reasonable to assume that people have roughly the same amount of neighbors and that the graph is well connected. If there are sections of the graph that are disconnected or only have one edge connecting them and fail due to too much traffic, this would be equivalent to multiple smaller instances of our protocol running. For very large protests, it is unlikely that the entire graph is connected and we would treat this as multiple smaller protests. Overall, we are capturing the key features of the graph that we are concerned about.

We have not evaluated Bluetooth 5 with many phones running in close proximity, and this may cause interference that is not captured by our model. In reality, this may lower the effective bandwidth of each device \cite{zhen_2002}. With that said, it takes a relatively large number of piconets (group of devices sharing a channel) to have the throughput be severely impacted \cite{bocker_2017}. Therefore, this is not a concern in our model excluding potentially the most densely-packed networks.

Phones are not only running our protocol; there are many other apps using system resources, including the possibility of using Bluetooth antennas. If CPU cycles are spent on other tasks, this decreases how many decryptions per second the phone can do and if Bluetooth is being used to transfer data for another application, this lowers how much bandwidth is left for our protocol. Currently, as will be seen in Section \ref{sec:eval}, the CPU is not the bottleneck and is therefore not a major concern. We assume that users of this app do not use other apps that rely on heavy use of the Bluetooth antennas while using this protocol. We also assume that every member of the protest is using this app, which is a very pessimistic assumption because it would lower the spacing between users. Spacing has a direct impact on performance since it affects how many neighbors a device has and therefore how much duplicate traffic it receives. This depends on the choice of the broadcasting algorithm (see Section \ref{broadcasting_algo}). Phones receive all wireless transmissions within range, making sampling of neighbors not an option to reduce this effect. In reality, a fraction of the crowd would be using this protocol, increasing the spacing between users and decreasing the number of total users. Therefore, our model makes reasonable expectations of what conditions can be expected in real settings.

\section{Evaluation}
\label{sec:eval}

Now that we have a model that provides all of the required security and anonymity properties, we can examine how practical it is. How many users does it need to scale to in order to be applied to typical protests in the real world? The median size of a protest in the US was 201 people and the mean was 2824 people. In the UK, the median was 300 people and the mean was 6693 people \cite{Biggs15sizematters}. Small protests are also far more likely to be under reported, making the true mean and median values lower. In Washington D.C., only around 50 protests a year have more than 1000 people \cite{mcphail_2004}. 

While smart broadcasting allows for more users than simple broadcasting by reducing the amount of duplicate data, it does not entirely eliminate it. Two of a device's neighbors may request the same message or two neighbors might both respond to the same missing message request. In both of these cases, only one needed to do so for the original device to receive the information. This creates an interesting effect where \emph{spacing}, how far apart devices are in the protest, impacts a device's bandwidth consumption, i.e. devices having more neighbors increases average bandwidth usage and lowers the capacity of the graph.

Our system can support 200 users when the crowd is tightly packed together (people two to three ft apart) and messages are exchanged frequently (every 30 seconds). When people stand further apart in the crowd (people 15 ft apart) and the message rate is lowered (every minute), our system supports several thousand users. This can be improved further by decreasing the message sending rate and decreasing the strength of Bluetooth antennas in devices based on how many neighbors a user has (see Section \ref{bt_range_limit}). Almost all messages are delivered within a few seconds in smaller, crowded topologies (a few hundred users standing a few ft apart) and within two minutes in much larger topologies (several thousand users standing 15 ft apart). Our system performs well enough to accommodate protests of moderate size, providing reliable message delivery in a reasonable amount of time. For very large protests (several tens of thousands of people or more), it is likely that these crowds would not form a fully connected graph and can be thought of as many instances of these moderately sized protests, making our system a potential option for these environments as well.

The model is written in Python 3 and was run on a Macbook Pro (2019) and machines on Google Cloud Compute Engine running Ubuntu 20.04. The specs of the host machines do not affect the results of the model. The logs that are produced can be quite large (10s to 100s of GB) and are also processed in Python. These logs are only used for analyzing our model and would not be produced in a deployed system.

Now, we examine if the following performance requirements are met: 
\vspace{-1ex}
\begin{myitemize}
    \item Device computations, which primarily are comprised of cryptographic operations, are not the bottleneck
    \item Capacity is large enough for real protests
    \item Messages are delivered in a timely matter
\end{myitemize}
\vspace{-1ex}
Specifically, we answer the following questions: How much bandwidth does a device use? How many devices can this design support? How quickly are messages delivered? How do factors such as how many devices are participating, how close people are together, how often a \ms{DS} is shared, and how does the message sending rate affect these questions? Finally, we demonstrate the performance benefits that smart broadcasting has over simple broadcasting.

\subsection{Is Cryptography the Bottleneck?}

\begin{table}
\centering
\begin{tabular}{|l|l|l|}
\hline
\textbf{Phone} & \textbf{Dec. Time} & \textbf{Throughput} \\ \hline
Pixel (1st gen) & 0.57ms & 3.60 Mbps \\
Pixel (4th gen) & 1.35ms & 1.52 Mbps \\ \hline
\end{tabular}
\caption{Average decryption time on Android devices for Bluetooth messages and the corresponding maximum throughput that could be achieved without the decryptions being the bottleneck.}
\label{table:decryption-results}
\vspace{-3em}
\end{table}

Every phone needs to check whether it is a recipient for every message sent in the system. As described in Section~\ref{sec:system-description} this requires doing a decryption using ElGamal and AES. If this were too slow, it would limit the system's capacity.

We ran microbenchmarks to evaluate whether decryptions are a bottleneck and found that they are not. We used implementations of ElGamal over the elliptic curve p256v1 \cite{lubux2020} and AES written in C, and ran them on a Google Pixel (1st Generation) and a Google Pixel (4th Generation). We generated one thousand random messages of the same length used in our protocol. We encrypted and decrypted each string, and timed the results. Using the average decryption time, we calculated how much data each phone would be able to decrypt per second and defined this as \emph{Throughput}. Given that a phone can only receive up to 1.4 Mbps of data over Bluetooth, any Throughput higher than 1.4 Mbps implies that all messages can be decrypted at least as quickly as they are received. Table~\ref{table:decryption-results} shows the results for both phones. Since cryptographic operations are not the bottleneck of the system, we do not incorporate them in our model.

\subsection{How many devices can this design support?}
To figure out the maximum capacity of the network, we found the maximum number of devices that could be placed into the network while keeping the 95th percentile of bandwidth under 1.4 Mb/s for various spacings, send rates, and broadcasting types (Figure \ref{fig:capacities}). Realistically, the topologies will have high degrees and many paths between two randomly chosen devices. This means that messages still will typically have multiple delivery paths if a higher number of devices ``fail'' by having temporary bursts above 1.4 Mb/s. The 90th and 85th percentiles are graphed as a result and scale to several thousand devices at higher spacings. Lowering message frequency increases the capacities even further at all spacings.

\begin{figure*}
     \centering
     \begin{subfigure}[b]{0.3\textwidth}
         \centering
    	 \includegraphics[width=\linewidth]{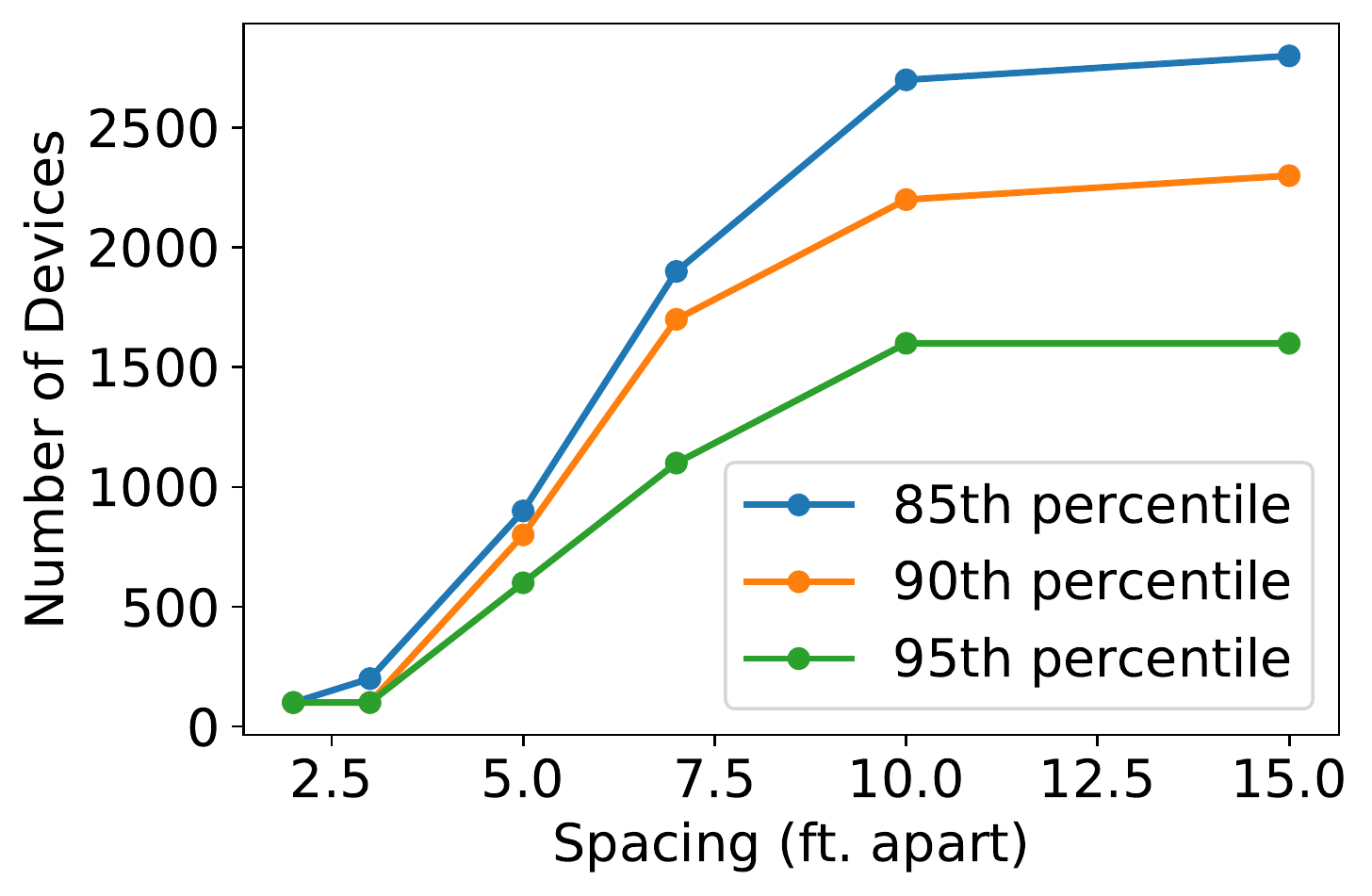}
         \caption{Smart broadcasting - 30 second message interval}
         \label{fig:capacities_smart_30}
     \end{subfigure}
     \begin{subfigure}[b]{0.3\textwidth}
         \centering
    	 \includegraphics[width=\linewidth]{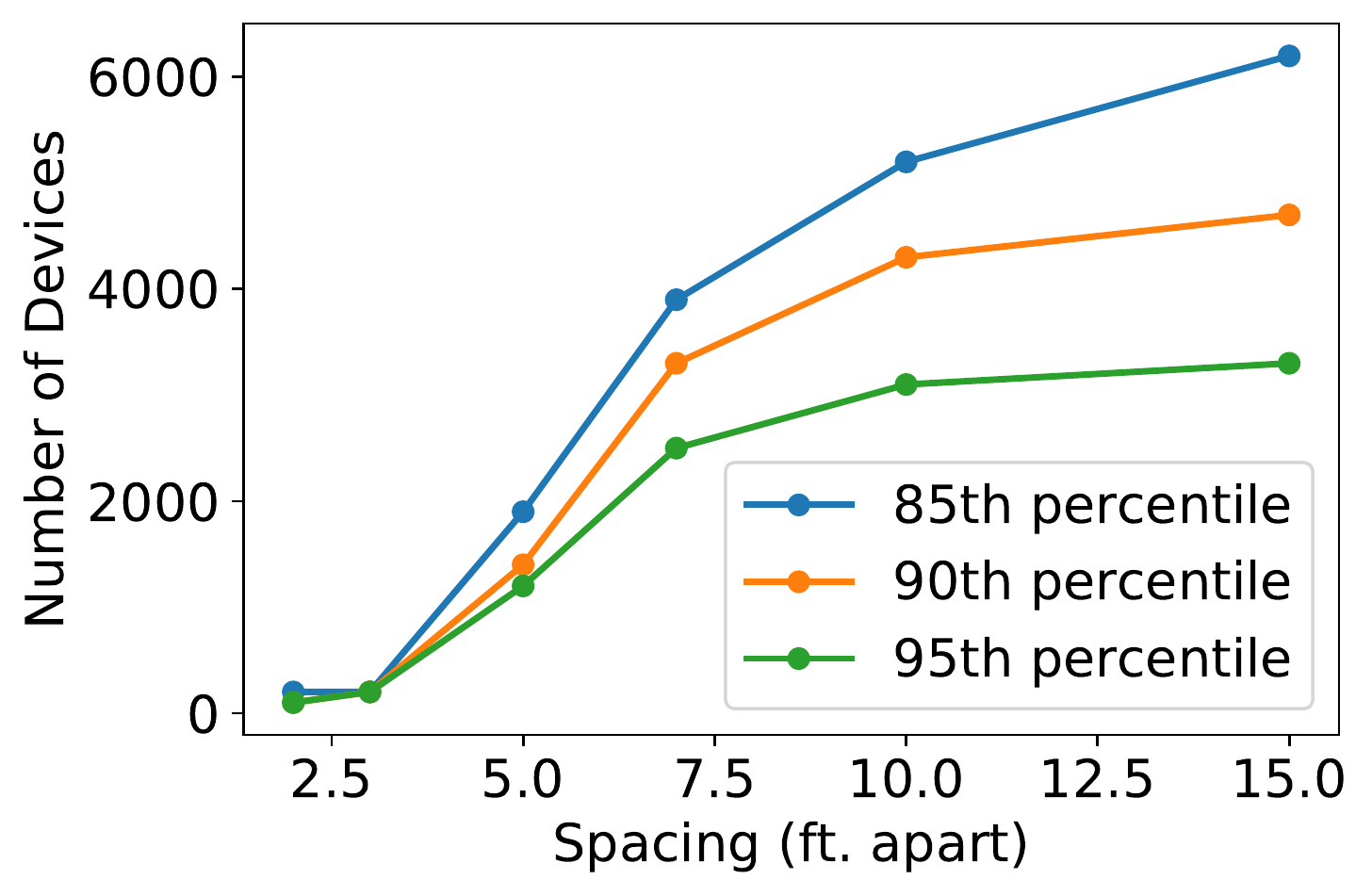}
         \caption{Smart broadcasting - 60 second message interval}
         \label{fig:capacities_smart_60}
     \end{subfigure}
     \begin{subfigure}[b]{0.3\textwidth}
         \centering
    	 \includegraphics[width=\linewidth]{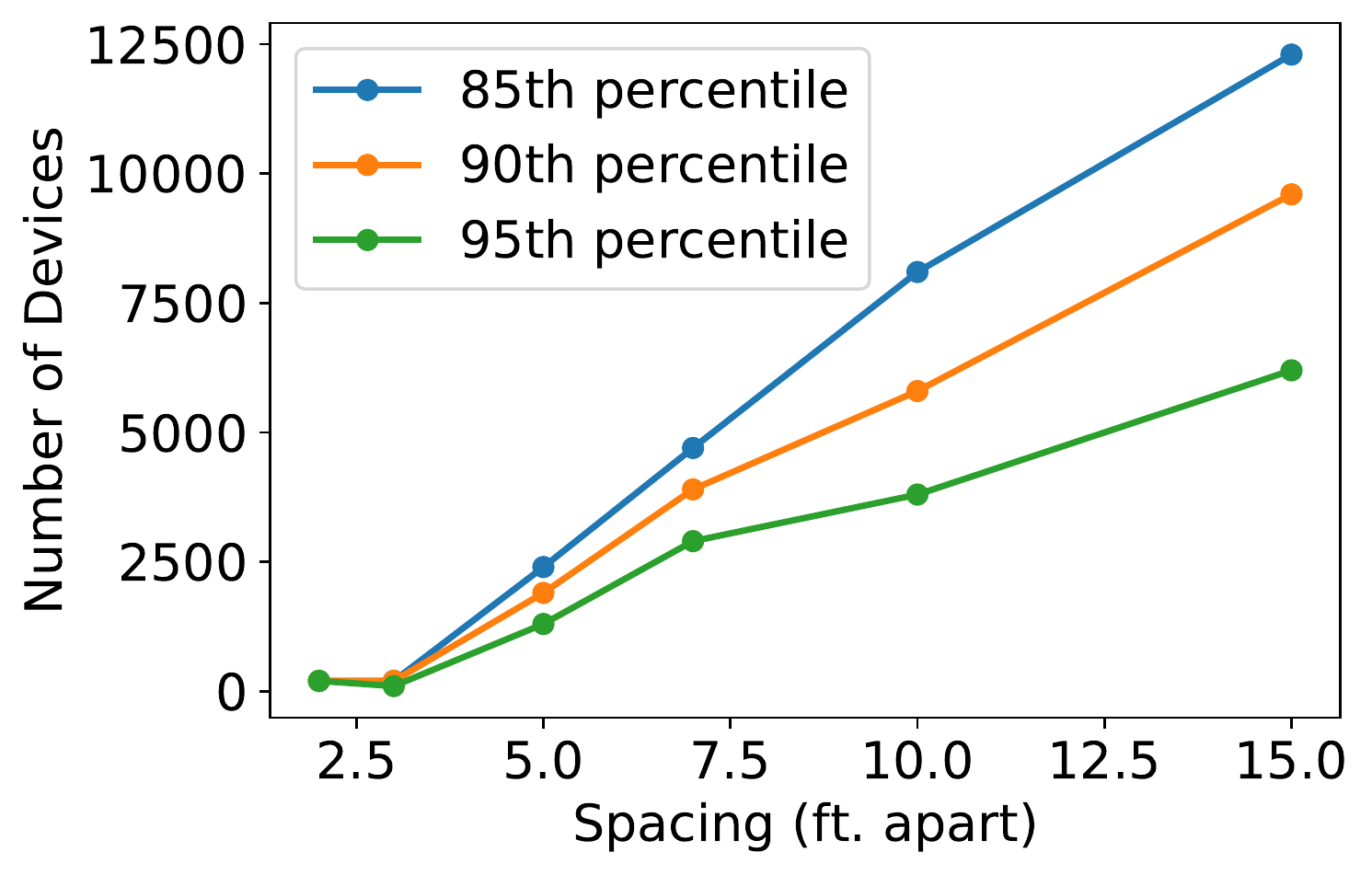}
         \caption{Smart broadcasting - 120 second message interval}
         \label{fig:capacities_smart_120}
     \end{subfigure}
     \caption{Capacities for smart broadcasting with various message sending intervals. All models ran for 5 minutes. The number of devices is the capacity of the system rounded down to the nearest 100. Depending on the spacing of the crowd and message sending interval, our system supports 200 to over 5000 devices.}
     \label{fig:capacities}
     \vspace{-1em}
\end{figure*}

\subsection{How Much Bandwidth Does a Device Use?} \label{sec:how_much_bandwidth}
\begin{figure*}
     \centering
     \begin{subfigure}[b]{0.27\textwidth}
         \centering
    	 \includegraphics[width=\linewidth]{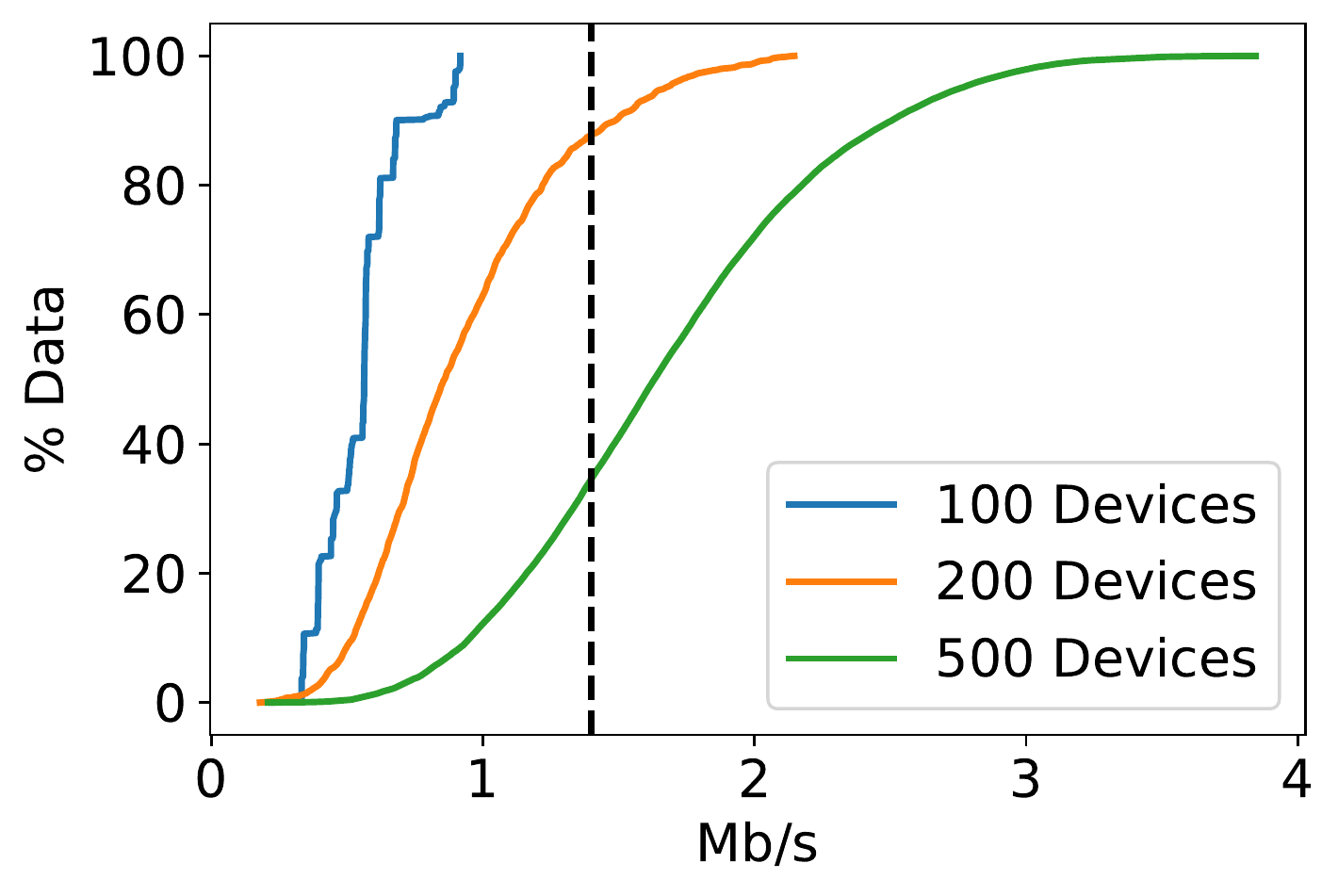}
         \caption{3 ft spacing delivery}
         \label{fig:3_bandwidth}
     \end{subfigure}
     \hfill
     \begin{subfigure}[b]{0.27\textwidth}
         \centering
    	 \includegraphics[width=\linewidth]{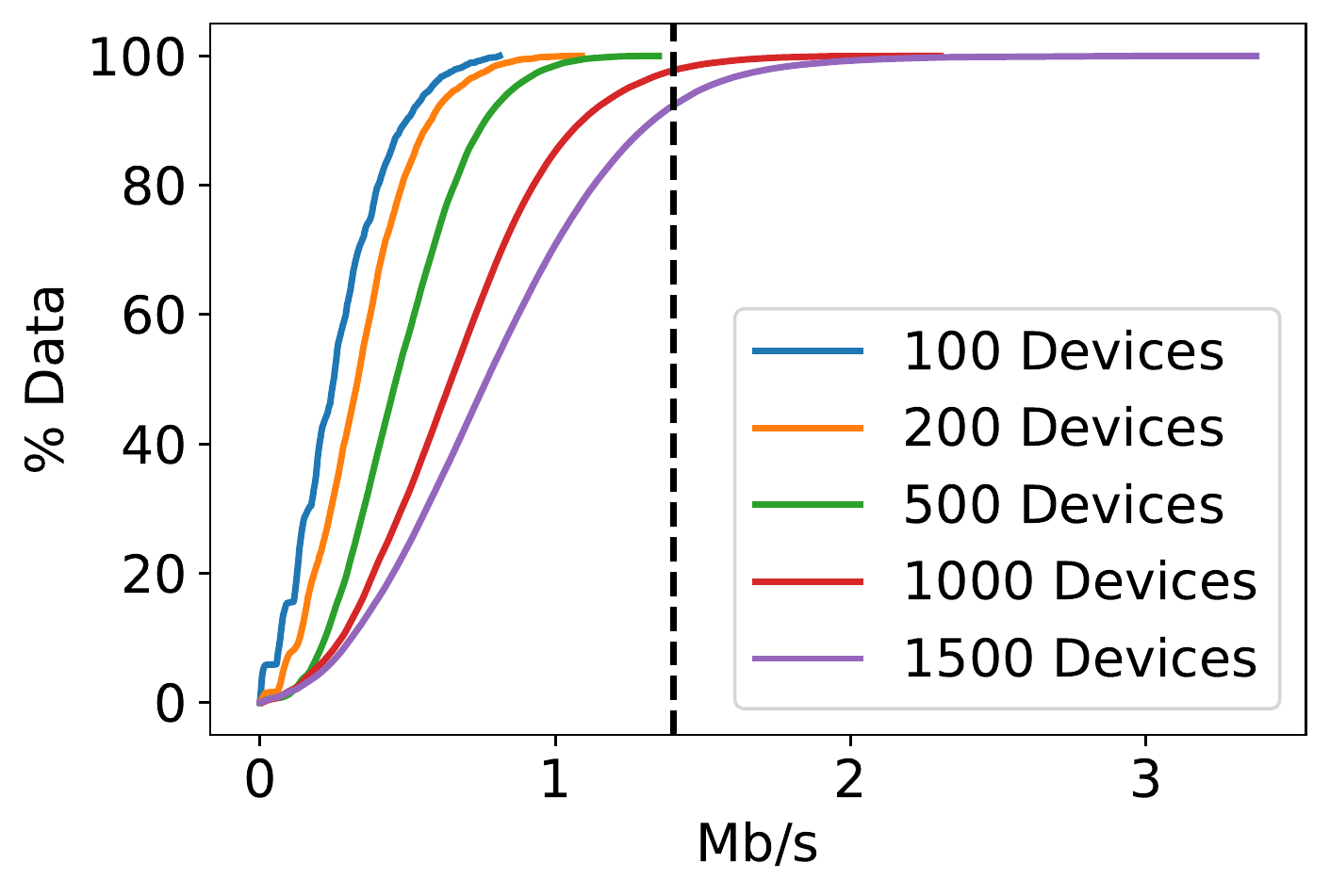}
         \caption{7 ft spacing delivery}
         \label{fig:7_bandwidth}
     \end{subfigure}
     \hfill
     \begin{subfigure}[b]{0.27\textwidth}
         \centering
    	 \includegraphics[width=\linewidth]{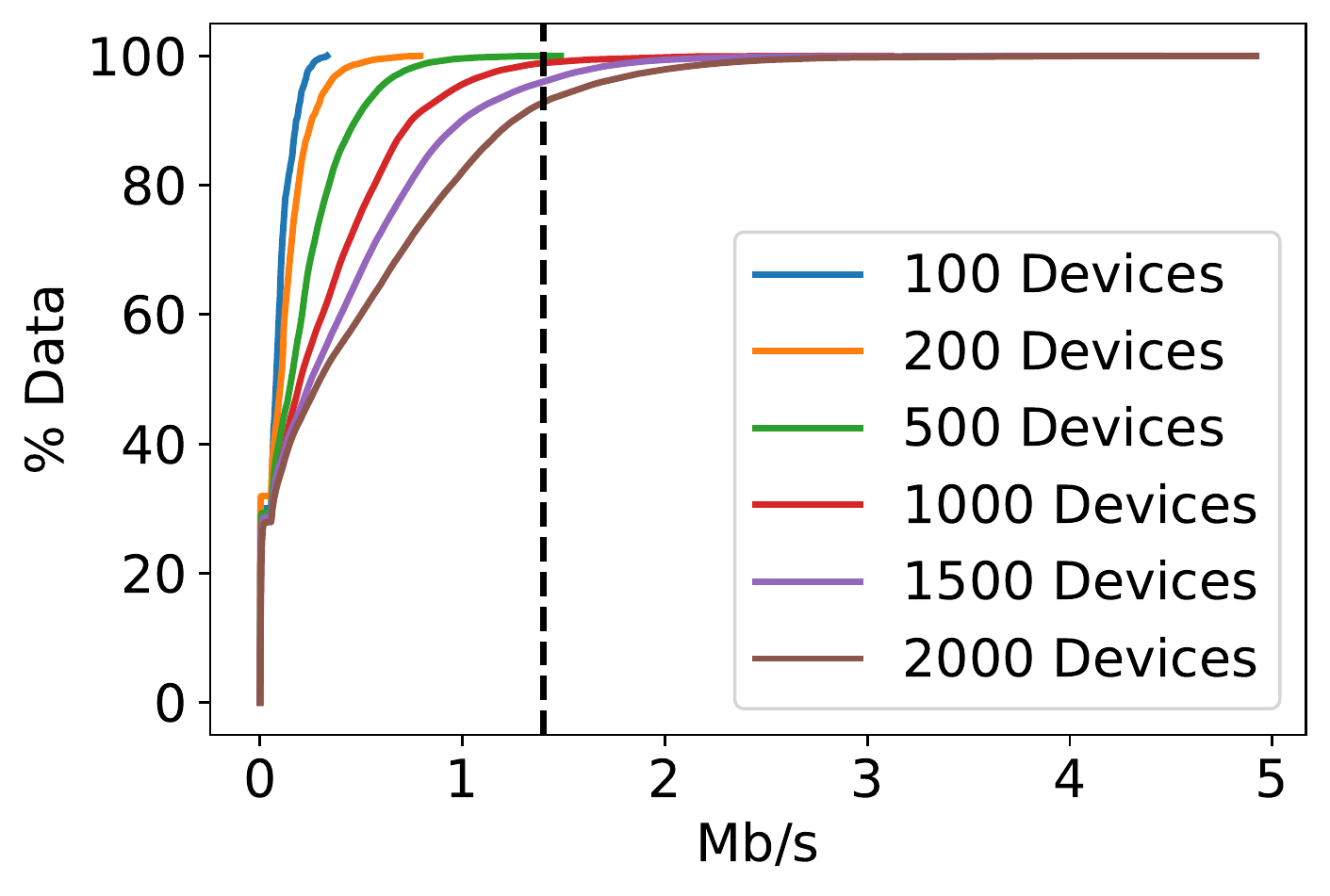}
         \caption{15 ft spacing delivery}
         \label{fig:15_bandwidth}
     \end{subfigure}
	 \caption{Bandwidth CDFs for various spacings. All models ran for 5 minutes, had a \ms{DS} exchange interval of 10 seconds, and sent messages every 30 seconds. While at lower spacings a large portion of devices exceed their bandwidth limits after 200 devices participate, at higher spacings most devices stay below their bandwidth limits even after the number of devices in the network increases to over 1000.}
     \label{fig:bandwidths}
     \vspace{-1em}
\end{figure*}

This depends on how many devices are in the network, how many neighbors a device has, how frequently \ms{DS}s are exchanged, and how frequently messages are sent. Since all devices receive all messages from all other devices, bandwidth per device scales at least linearly with the number of participants in the network. How bursty the traffic is and the topology affects bandwidth as well. If all devices sent their messages at the same time, this would be the worst case scenario. We send messages at intervals relative to the uniform random value \ms{turn} to try and avoid this. Similarly, if the graph had the topology of a barbell (where half of the graph needs a specific edge in all of its paths to the other half of the graph), the devices that form the connecting edge would receive a lot of traffic and potentially fail. 

The number of neighbors a device has affects a device's bandwidth due to the broadcast communication channel. Since devices have a fixed range that they can broadcast/receive messages in, spacing has a large impact on performance. The closer people are, the more neighbors they have, which increases the degree of the graph. While there are a fixed number of messages that are possible to hear, having more neighbors means that a device will hear more people's requests for missing messages, others' responses, and more exchanges of \ms{DS}s. If simple broadcasting was used, more duplicate messages would be heard as well. An example is when devices to the left and right of a user are responding to requests for the same message and the user hears both of these responses due to the nature of broadcast communication. Figure \ref{fig:bandwidths} shows the effects of spacing when all other factors are held constant. When people are tightly packed together, after a few hundred people join the network, large portions of the CDF fall above 1.4 Mb/s while it takes thousands to see similar effects when people are 15 ft. apart.

\begin{figure}
     \centering
     \begin{subfigure}[b]{0.35\textwidth}
         \centering
    	 \includegraphics[width=\linewidth]{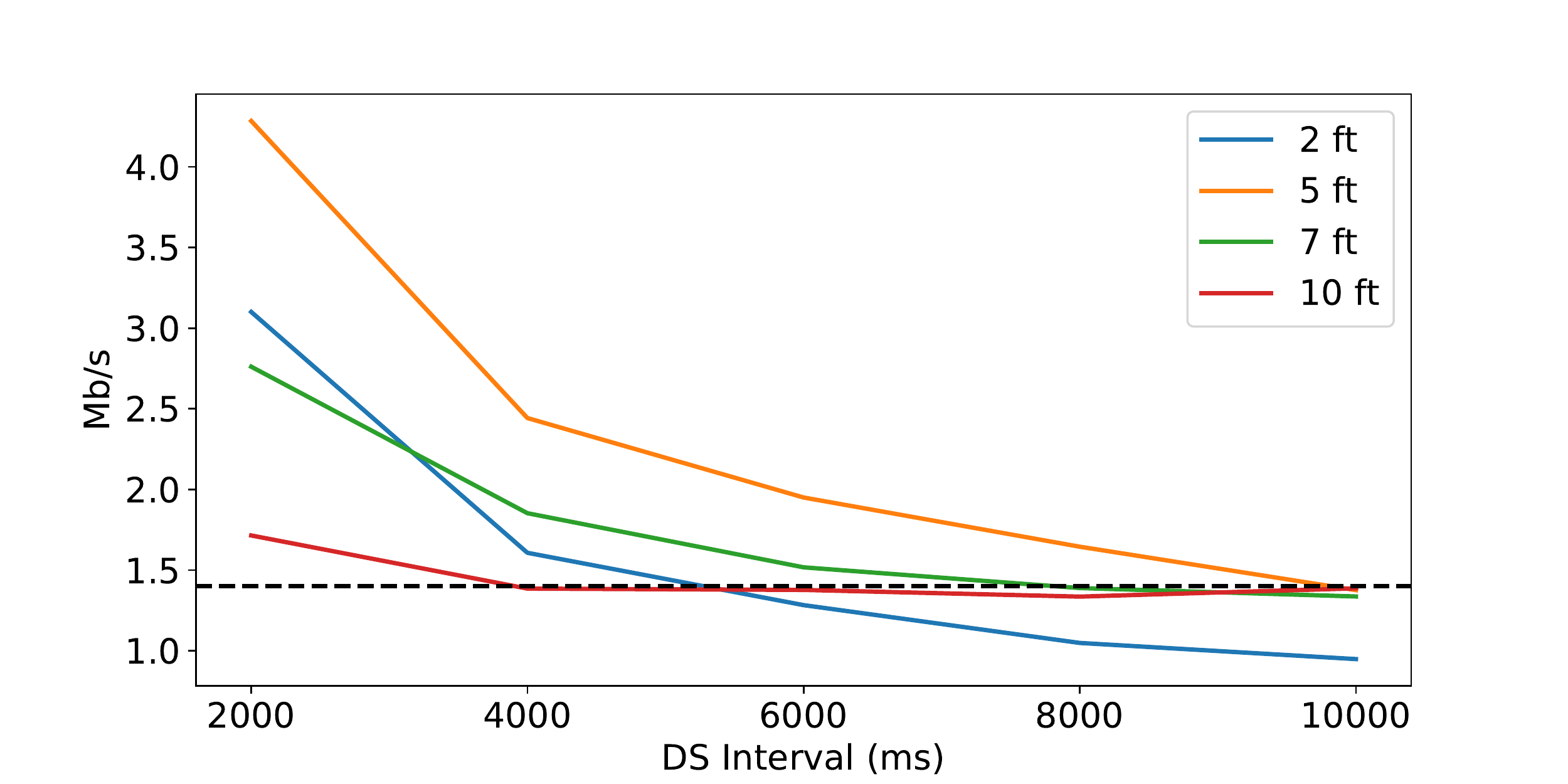}
         \caption{95th Percentile Bandwidth for various \ms{DS} exchange intervals at lower spacings. All slopes are negative, but the slope starts to flatten out at a spacing of 10 ft.}
         \label{fig:ds_low}
     \end{subfigure}
     \hfill
    \begin{subfigure}[b]{0.35\textwidth}
         \centering
    	 \includegraphics[width=\linewidth]{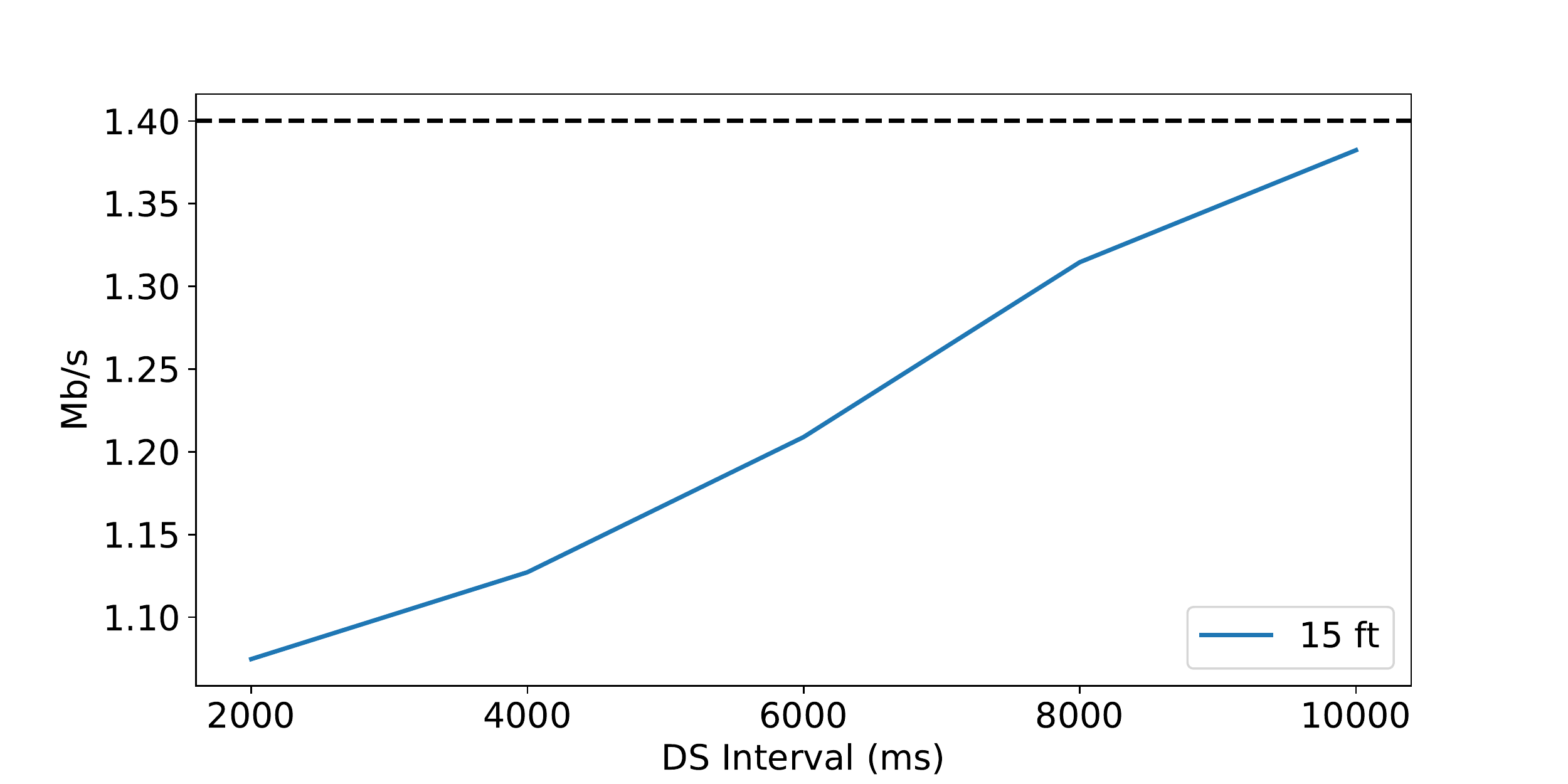}
         \caption{95th Percentile Bandwidth for various \ms{DS} exchange intervals at a higher spacing of 15 ft. The slope is now positive.}
         \label{fig:ds_high}
     \end{subfigure}
     \hfill
        \caption{All models ran for 5 minutes and sent messages every 30 seconds. At lower spacings, increasing the \ms{DS} exchange interval lowers the bandwidth used by devices, increasing the capacity of the system. At higher spacings, increasing the \ms{DS} exchange interval increases the bandwidth used by devices, decreasing the capacity of the system. Therefore, the optimal \ms{DS} exchange interval depends on the spacing in the protest.}
        \label{fig:bitlist_effects}
        \vspace{-1em}
\end{figure}

The rate at which \ms{DS}s are exchanged has an effect as well. If \ms{DS}s are exchanged too frequently, then more bandwidth will be used exchanging \ms{DS}s that will likely contain little to no new information. If \ms{DS}s are exchanged too infrequently, then devices learn about many missing messages all at once. The responses from the resulting requests will be larger as more messages need to be shared and therefore traffic will be more bursty. This is further complicated by the fact that the optimal interval also depends on the spacing of devices. Topologies that have devices close together benefit from exchanging their \ms{DS}s less frequently since they have more neighbors to hear from and the state between neighbors is similar. Topologies that have devices more spread out benefit from doing so more frequently. This is because they have fewer neighbors and when they hear an update, it is more likely to contain a larger amount of new information. Also, since there are less neighbors, this decreases bandwidth from updates, holding all else equal (see Figures \ref{fig:ds_low} \& \ref{fig:ds_high}).

Finally, message frequency has an intuitive effect on bandwidth. Increasing the frequency that messages are sent at results in each device sending more messages. As a consequence, all devices have to process and forward more messages as well, increasing all of their bandwidth usage. This increase is directly proportional to the previous bandwidth usage before the increase, excluding non-message traffic such as exchanging \ms{DS}s.

\subsection{How quickly are messages delivered?}
\begin{figure}
    \centering
    \includegraphics[width=0.6\linewidth]{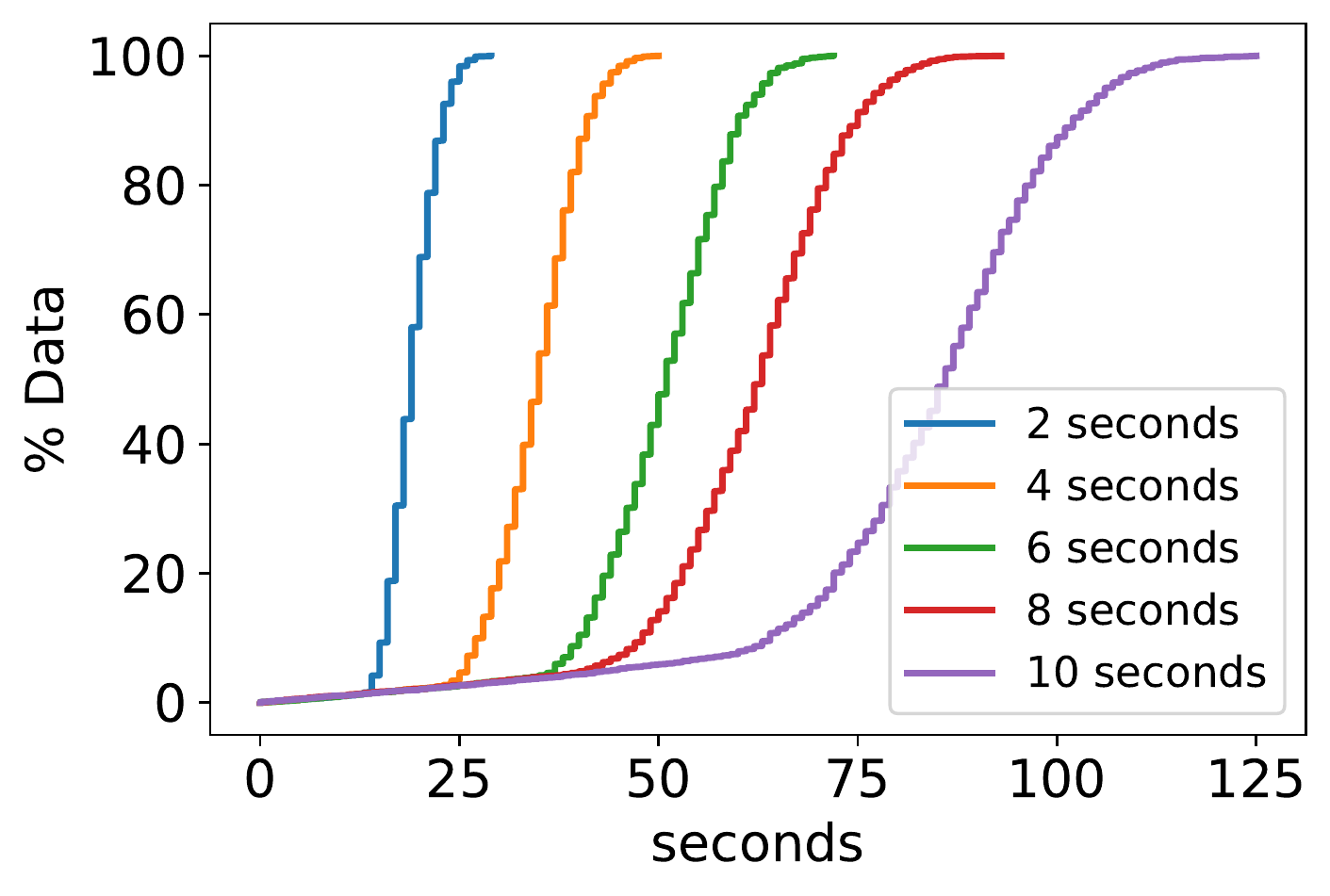}
    \caption{Delivery CDF for various \ms{DS} exchange intervals. All models ran for 5 minutes, had a spacing of 15 ft., and sent messages every 30 seconds. Increasing the \ms{DS} exchange interval causes longer delivery time for messages because each message must ``wait'' longer before traversing each hop in the network. This can cause a trade-off between capacity and message delivery time for certain network configurations.}
    \label{fig:bitlist_delivery}
\end{figure}
\emph{Message delivery time} is defined as the time that it takes for a message to reach all other devices in the graph from the point in time that it was created. This is a very conservative definition as these delivery times have somewhat long tailed distributions (Figure \ref{fig:deliveries}) and many phones will receive the message before the message is considered ``delivered.'' Adding devices to the graph may increase its diameter which corresponds to longer delivery times as messages need to traverse more hops to be delivered. Spacing impacts this as well since the diameter of the graph may shrink as people move closer together. Increasing the \ms{DS} Exchange Interval size increases the message delivery time (Figure \ref{fig:bitlist_delivery}) since each message effectively has a longer waiting period before traversing the next hop. In smaller topologies, messages are delivered within a few seconds while in larger topologies with devices spaced out 15 ft apart, messages are delivered within 2 minutes in the model. It is important that message delivery not take longer than the \ms{time\_to\_keep} in the \ms{DS} as this could result in infinite loops. This means that messages are delivered quickly enough to support a wide range of communication patterns including notifying friends of a change in location, finding friends after becoming separated, informing friends of an important event at a distant part of the protest, etc.

\begin{figure*}
     \centering
     \begin{subfigure}[b]{0.25\textwidth}
    	 \includegraphics[width=\linewidth]{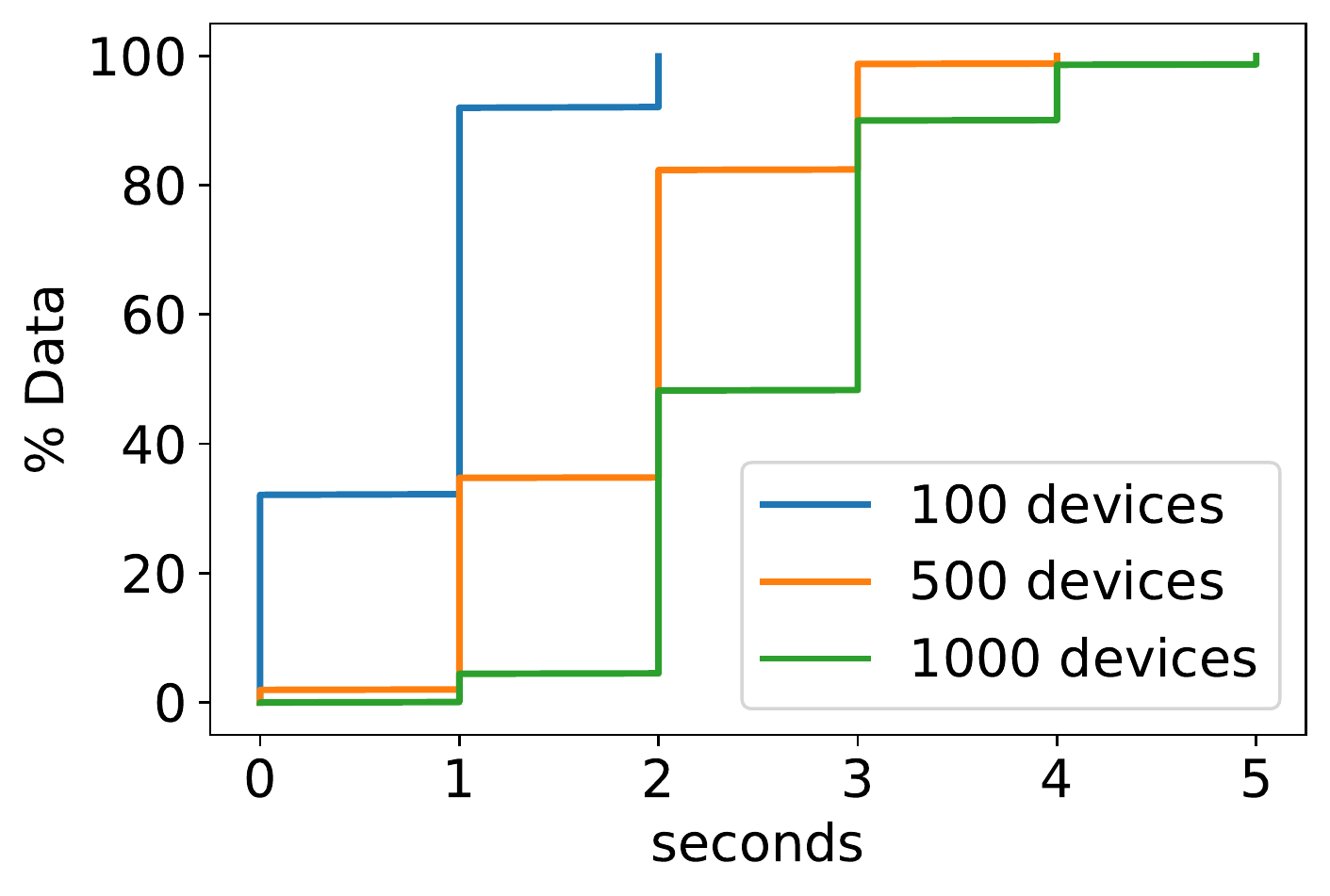}
         \caption{5 ft spacing delivery}
         \label{fig:5_delivery}
     \end{subfigure}
     \begin{subfigure}[b]{0.25\textwidth}
         \includegraphics[width=\linewidth]{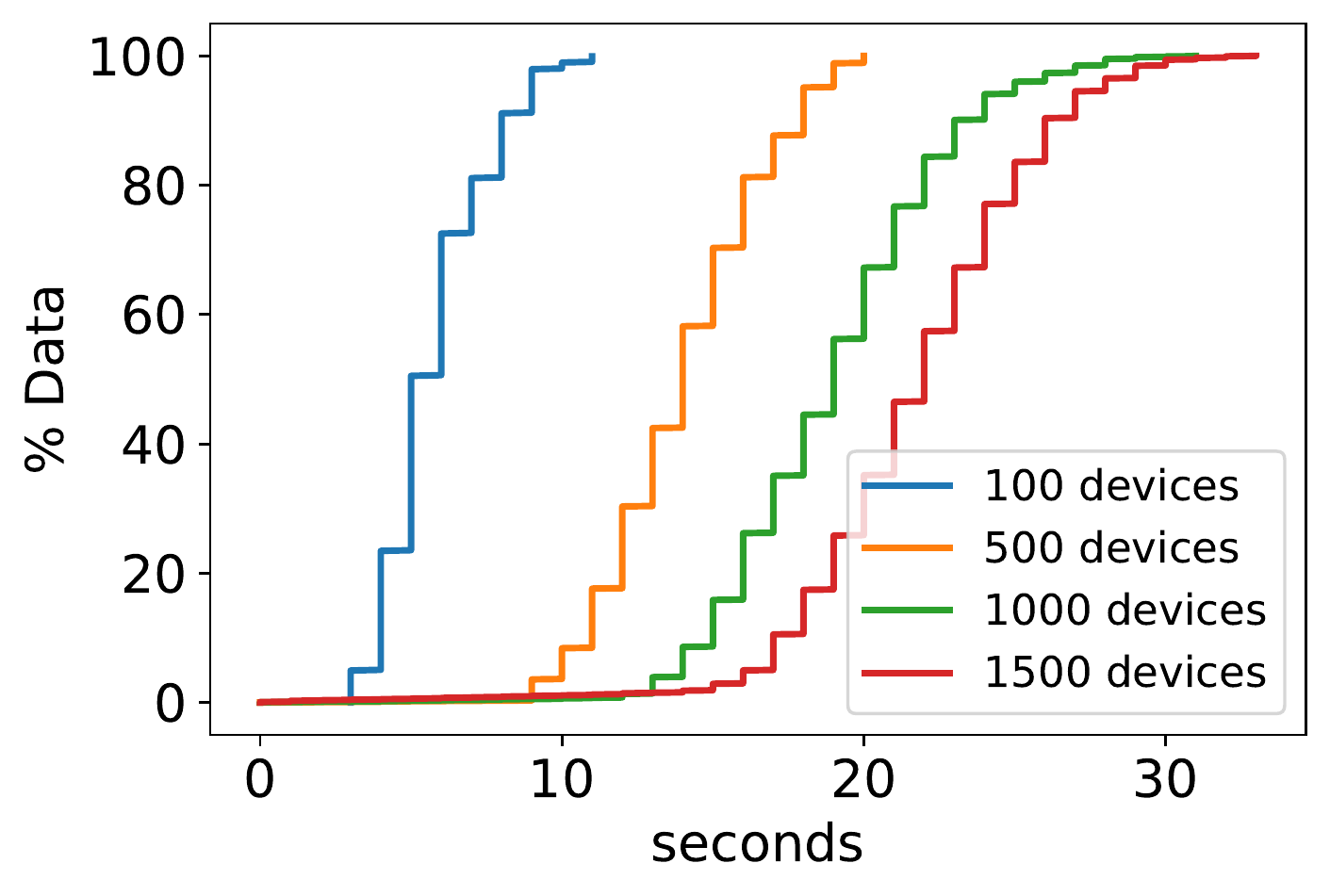}
         \caption{10 ft spacing delivery}
         \label{fig:10_delivery}
     \end{subfigure}
     \begin{subfigure}[b]{0.25\textwidth}
         \includegraphics[width=\linewidth]{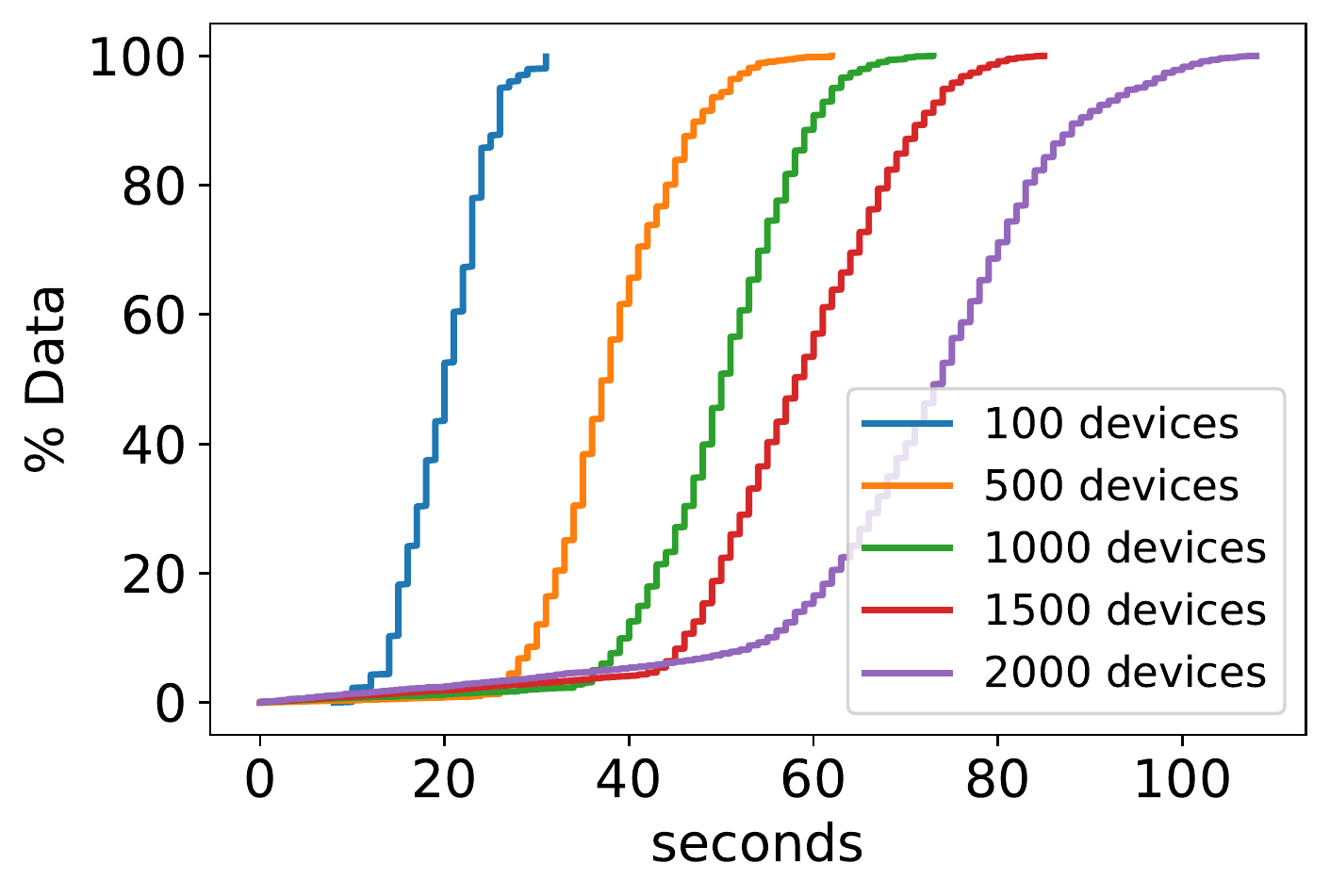}
         \caption{15 ft spacing delivery}
         \label{fig:15_delivery}
     \end{subfigure}
    \caption{Delivery Time CDFs for various spacings. All models ran for 5 minutes, had a \ms{DS} exchange interval of 10 seconds, and sent messages every 30 seconds. Increasing the spacing causes longer delivery times for messages because the graph is ``less connected,'' requiring messages to go through more hops. This causes a trade-off between capacity and message delivery time.}
    \label{fig:deliveries}
\end{figure*}

\subsection{Benefits of Smart Broadcasting}
\begin{figure}
\centering
\includegraphics[width=0.6\linewidth]{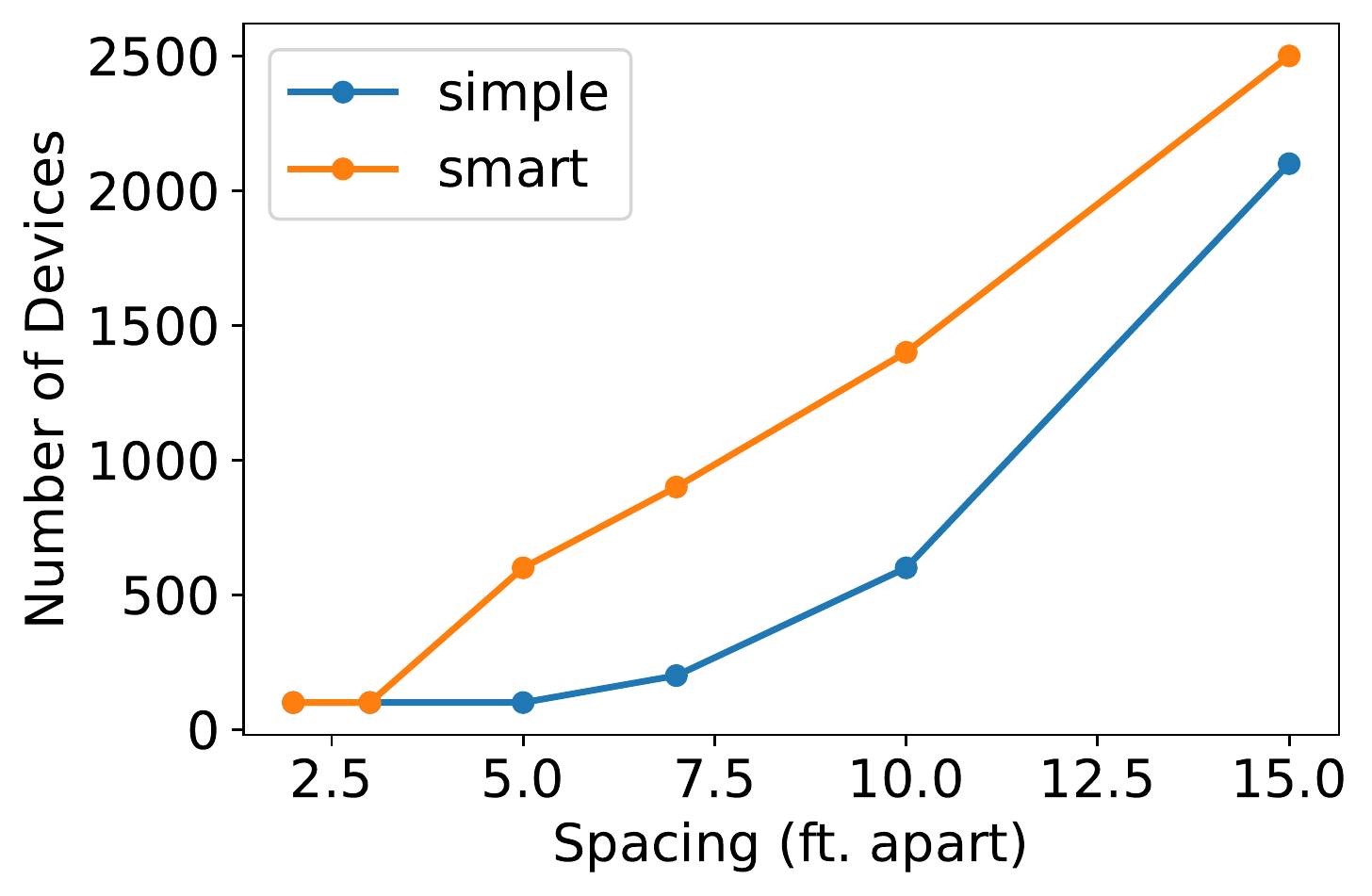}
\caption{Comparison of simple and smart broadcasting with a varying \ms{DS} exchange rate of 10s (for spacings less than 15 ft) and 2s (for spacing of 15 ft). These values were chosen based on the results of Figure \ref{fig:bitlist_effects}. Spacing can be approximated by a device by using the number of neighbors it has. The 95th percentile of bandwidth is used to find the capacity, all models ran for 5 minutes, and messages are sent every 30 seconds. Smart broadcasting performs at least as well (often substantially better) than simple broadcasting at all spacings.}
\label{fig:simple_vs_smart}
\vspace{-1em}
\end{figure}

In addition to the reliable delivery properties that smart broadcasting provides, it also allows for higher capacities at most spacings. Smart broadcasting outperforms simple broadcasting at all tested spacings higher than three feet. The two algorithms also perform equally well at lower spacings. Figure \ref{fig:simple_vs_smart} shows these results and the following parameters were used: messages were sent every 30 seconds, the 95th percentile of bandwidth was used to calculate the capacity, and smart broadcasting had a \ms{DS} exchange rate that varied between two and 10 seconds. The optimal \ms{DS} exchange rate is a function of the spacing of the graph, which can be calculated from the number of neighbors that a device has. As the spacing increases, devices have less neighbors, meaning that they receive less digests and message requests from their neighbors. Asymptotically, this becomes closer to simple broadcasting as spacing increases, and the two algorithms' performance will converge.
\section{Design Extensions}
\label{sec:extensions}

Here we discuss further extensions to our system. We show how to reduce the negative effects of spacing by changing the signal strength of Bluetooth antennas. For networks with higher available bandwidth where the cost of decryptions may become the bottleneck, we describe a solution where the decryption cost scales with the number of friends that a user has. Finally, for very large protests, we outline how one can achieve substantially further scaling at the cost of some anonymity.

\subsection{Limiting Bluetooth Range} \label{bt_range_limit}
Decreasing the strength of the Bluetooth antennas in devices would not only decrease power consumption, but can also increase the capacity of networks with tightly packed crowds. Artificially lowering the range of Bluetooth devices would be equivalent to increasing the spacing of devices, which would increase the capacity of the network. For example, this can make a network with a spacing of 5 ft have a capacity closer to one with a spacing of 15 ft. Automatically adjusting the strength of the antennas based on how many neighbors a device has can alleviate constraints of networks with low spacings and avoid trade-offs between spacing and the \ms{DS} exchange rate. When advertising a message in Android, one of four power levels can be specified. We measured the range of these settings on a first generation Google Pixel and a Samsung Galaxy S5 Active in an outside area with direct lines of site between the phones and no other Bluetooth devices providing interference. At the lowest power setting, we found that the Samsung phone received advertisements from the Pixel at a distance of 40 ft. At the highest power setting, this extended past 130 ft. Given objects blocking direct lines of site, e.g., walls, buildings, other people, and the interference of other Bluetooth devices, these ranges would be greatly reduced in the setting of a real protest and we are more interested in the ratio of these numbers.

\subsection{Symmetric vs Public Key Decryption}
One significant computational cost of this system is each device trying to decrypt each message that it receives. This could potentially lead to denial of service (DoS) attacks where a device has to process many messages quickly and cannot do it in time, having to drop them. One potential solution to this is for device A sending a message to device B to MAC their messages with a key shared between A and B. Then, B can use symmetric cryptography to quickly check if a message is for them from A. Currently, this is not the bottleneck for scaling or the most serious denial of service attack, as the network links would fail before devices fail due to computation limits. However for completeness, we discuss the trade-offs of this approach in this section.

MACing messages with pre-shared secret keys has two challenges. The first is sharing the key. Our protocol allows users to share public keys in person during the \emph{Key Distribution Phase} (see Section \ref{sec:key_distribution}). These shared MAC keys can also be exchanged at this point, solving the first problem. The second problem is that a device must check the MAC with each shared MAC key that it possesses to determine if the message is for itself. Let \ms{Friends} be the list of friends that a device has and \ms{MAC_{time}} be the time it takes to verify a MAC. Since the device has one shared key for each friend, this requires \ms{\lvert  Friends \rvert} MAC checks. Therefore, which is faster depends on the time to compute \ms{\lvert Friends \rvert * MAC_{time}} and  the time it takes to do a public key decryption.

\subsection{Clique Broadcasting}
\begin{figure}
    \centering
    \includegraphics[width=.6\linewidth]{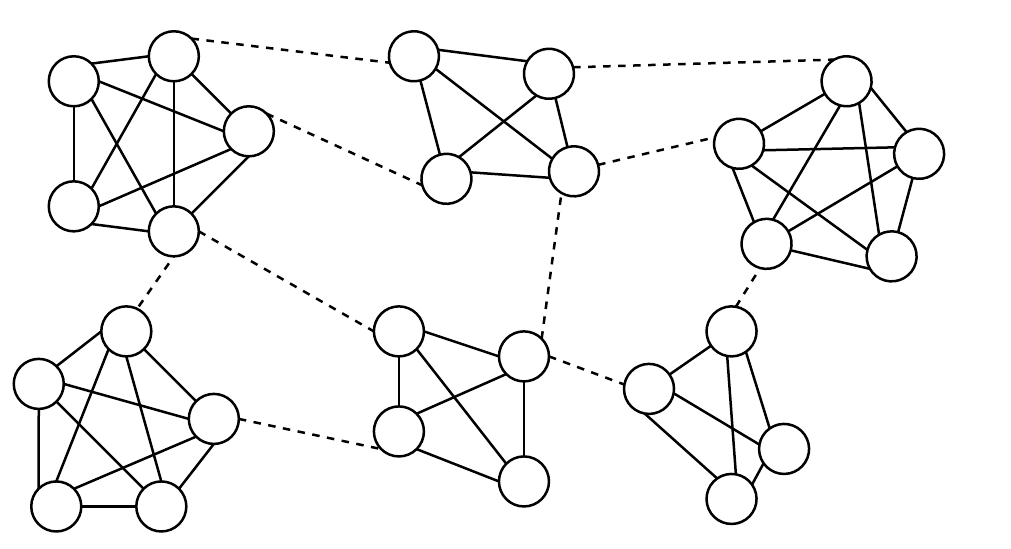}
    \caption{Graph over cliques. Connections within a clique are shown with a solid line. The logical connections that form the graph over the cliques are shown with a dashed line.}
    \label{fig:cliques}
    \vspace{-1em}
\end{figure}

Further capacity scaling can be gained with a more sophisticated broadcasting algorithm. The idea is to make a graph on top of the graph called the ``clique graph,'' where each node in the clique graph is a group of nodes in the original network that have formed a clique. All nodes in a group must be in a clique so that when messages are broadcast within a clique, no information can be learned from the transmission time or the number of hops taken. Each group has a leader who is in charge of choosing which edges to use to send a message from their clique to a neighboring clique (in the original system, there might be multiple edges going between these groups and a message may be sent over more than one of these edges). This decreases duplicate traffic. When a device receives a message from another clique, they broadcast it within their clique so that all members have the message. The leader then chooses how to rebroadcast it to neighboring cliques, or not to if they already have the message which can be done by exchanging \ms{DS}s (smart broadcasting) or not rebroadcasting messages that have already been seen by the clique (simple broadcasting). This allows the protocol to roughly scale by a factor of the clique size by cutting down on duplicate messages being shared between devices in separate cliques, allowing for much larger protests. Forming these cliques may be difficult, but one potential solution is to greedily form them. If a device does not end up in one, it can be its own clique (size of 1) and move around the physical environment until it can join a larger one.

If every group also chooses a group ID and users notify their friends of their group ID (through the non-clique broadcasting), then messages can be encrypted by the recipient's key and then their group's key. Groups can choose to not broadcast messages within their clique and just route it through to the next clique if it is not for their group. This allows for a further degree of scaling but sacrifices some anonymity.

\section{Related Work}

\subsection{Bridgefy Security}

Albrecht~\etal~\cite{breakingbridgefy} gives a high level overview of the apps that exist in this space and reverse engineers Bridgefy, demonstrating a series of catastrophic attacks on the app which completely breaks security and anonymity. The authors list privacy, authenticity, and reliability as key features needed for mesh messaging systems in a protest setting, but specify no formal security model. Also, Bridgefy has put out an update \cite{rios_2020} after this work to use Signal’s protocol for encrypting messages, but this does not solve many of the core problems related to metadata that our work addresses. Additionally, further implementation problems in Bridgefy leading to vulnerabilities were found by Eikenberg~\etal~\cite{eikenberg_albrecht_paterson_2021}.

\subsection{Anonymous Communication Systems}

Gathering metadata is a common practice among governments, leading to many systems attempting to hide this. The design and construction of anonymous communication systems has been an active area of research for decades, including the early work of Chaum~\cite{Chaum81,Chaum88}, mixnets \cite{Chaum81, mixminion, tarzan, riffle, atom, xrd, yodel} that present an anonymity/latency tradeoff \cite{trilemma}, DC-nets \cite{Chaum88, herbivore, dissent, dissentnumbers, verdict, riposte}, and MPC techniques \cite{asynchromix, blinder, mcmix}. Shuffling techniques have been improved by \cite{riffle, pynchon, dissentnumbers}, and mixing techniques have been improved to support more users \cite{atom, xrd}. Many anonymity systems have been proposed in the research literature such as Riposte \cite{riposte} and Blinder \cite{blinder} which use DPFs \cite{dpfs}, Talek \cite{talek}, Express \cite{express},  Pung \cite{pung}, and Clarion \cite{clarion}.

The most widely deployed system today is Tor~\cite{tor} which has been analyzed by \cite{BKM+16, KBS+19, KBS20}. Unfortunately, Tor is vulnerable to traffic analysis by passive adversaries with a global view of the network \cite{breaktor, JWJ+13, HB13} and this may be an inherent limitation of the system \cite{trilemma, trilemma2}. HORnet \cite{hornet}, Loopix \cite{loopix}, and TARAnet \cite{taranet} provide stronger resistance to passive attacks. Differential privacy-based approaches  \cite{Dwork06, vuvuzela, alpenhorn, stadium, karaoke} have also been explored to hide metadata. Encore \cite{Aditya14encore} allows mobile social apps to allow users to securely communicate with those in close proximity using existing network services. Proposed systems commonly rely on having access to servers on the Internet and are not suited to the networking restrictions of our problem. The problem of building anonymous mesh networks with strong privacy guarantees has received relatively little attention. Jansen and Beverly worked on providing anonymity in Delay Tolerant Networks \cite{5680442} and Lerner~\etal worked on allowing users to send out broadcast messages during Internet blackouts \cite{2016arXiv161203371L}. Recently, however,  Albrecht~\etal~\cite{breakingbridgefy} drew attention to the critical need for new solutions in this space by pointing out a number of weaknesses in deployed mesh messaging apps that have seen ad-hoc adoption in highly sensitive settings like protests. This work is an attempt to initiate a principled approach to defining and satisfying the security goals of this new setting. 

\subsection{Local Adversaries in Mesh Networks}
Moby \cite{pradeep2022moby} examines the problem of securely communicating during an Internet blackout while mitigating DoS attacks. It provides forward secrecy, has sender-receiver anonymity, delivers 50.73\% of messages when no adversary is present, and delivers 13.96\% of messages during DoS attacks. This is achieved by limiting the threat model to local adversaries who do not have a global view of the network and can only disrupt local sections of the network. Moby uses a bi-modal communication system that uses the Internet while it is available and switches to a mesh network during a blackout. An implementation is available as an extension to the Signal messaging application.

\subsection{Mesh Routing Algorithms}

The mesh messaging system we describe is an instance of a mobile ad-hoc network. There is a large body of research on algorithms for delivering messages from one device in a mobile ad-hoc network to another (\cite{venkat_mohan_kasiviswanath_2011} surveys). Primarily, these algorithms aim to ensure message delivery only to the destination device, which makes them unsuitable for hiding metadata. We use broadcast algorithms to solve this (\cite{RB2015} surveys). The algorithm we use is based on \cite{BHO+1999}, but could be replaced by another broadcast algorithm which ensures delivery to all devices and does not reveal metadata.

%

\section{Conclusion and Future Work}
Anonymous communication over mesh networks is growing in importance because of oppressive governments. Unfortunately, little work has been done in tackling this problem and to the best of our knowledge, this is the first proposal of a solution along with a security definition that meets the needs of protesters. Our protocol hides the contents of all messages and metadata except for the fact that someone is participating in the protocol. We modeled the protocol and showed that it can support protests of a few hundred to a few thousand users, depending on factors such as crowd density and message frequency. Finally, we presented some alternatives to parts of the protocol's design and outlined how to greatly increase its scaling at the cost of some anonymity or restrictions on how many friends the protocol allows a user to have.

We leave the following to future work:
\vspace{-1ex}
\begin{myitemize}
\item Modeling and implementing the clique broadcasting extension to support much larger protests
\item Examining weaker security definitions and what efficiency improvements they allow
\item Creating unicast algorithms that scale to large mesh networks that also sufficiently hide metadata
\item Preventing DoS attacks at the protocol level
\item Measuring power consumption of this protocol and its effects on battery life for various devices
\end{myitemize}
\vspace{-1.5ex}

\section{Acknowledgements}
We would like to thank Amalia Perry, Aidan Perry, Marie Perry, Deepak Kumar, Alex Ozdemir, Sanket Gupte, Gerry Wan, Kimberly Ruth, Hans Hanley, Liz Izhikevich, and Wilson Nguyen for writing feedback, debugging, and support. Google provided GCP credits to evaluate the model. This work was funded by NSF, DARPA, a grant from ONR, and the Simons Foundation. Opinions, findings, and conclusions or recommendations expressed in this material are those of the authors and do not necessarily reflect the views of these organizations.

{\small
\setstretch{0.95}
\bibliographystyle{abbrv}
\bibliography{reference, Broadcast_algorithms}
}

\appendices
\section{Model Implementation Details} \label{sec:sim_details}
The model takes the following as parameters: \ms{Minutes} (how many minutes of real time should the protest run for in the model, i.e. 5 would mean a protest that happens for 5 minutes in the real world and it may take more or less cpu time to run the model to completion), \ms{Devices} (how many devices participate in the model), \ms{Placement} (how people are placed in the model, i.e. uniformly in a grid or distributed normally), \ms{Spacing} (how far people are apart in the grid if uniform placement was chosen), \ms{Movement} (if people are moving or standing still), \ms{Broadcast\_Type} (simple or smart broadcasting), \ms{DS\_Interval} (how frequently to exchange a \ms{DS} if smart broadcasting was selected), and \ms{Rate} (how frequently to send messages).

The model works as follows. Time is broken up into discrete 100 ms intervals. At setup, every device is assigned a \ms{start\_time} randomly from a 1 minute time window using these intervals, i.e. \ms{60 * 1000 / 100 = 600} possible starting times. These devices are also placed on a 2D grid according to the provided \ms{Placement} parameter and they are connected with their neighbors. Then, the model proceeds by running a sequence of steps for each time interval until the value specified by \ms{Minutes} is reached. At each time interval, each device checks if they should send a message. If they should, a random message is generated and is added to the device's \ms{DS}. The sender's outgoing bandwidth for this time interval is updated and various values are recorded for bookkeeping and log processing. The message is sent to each neighbor and their incoming bandwidths are updated. If smart broadcasting is being used, then each device checks if they should send their \ms{DS} to neighbors. If they should, their outgoing bandwidth is updated, each neighbor receives a copy of the \ms{DS}, updates their list of missing messages, and their incoming bandwidth is updated. Then each device broadcasts a request which consists of a list of messages that they are missing. Each neighbor's incoming bandwidth is recorded and their list of requests to respond to is updated. Then, each device checks their list of requests to respond to, takes the union of them, and broadcasts this to all neighbors. The sender's outgoing bandwidth is updated, each neighbor's incoming bandwidth is updated, and each neighbor adds all messages that they receive that they did not already have in their \ms{DS}. Finally, each device removes all entries from their \ms{DS} that have expired. If simple broadcasting was used instead, devices keep a queue of received messages. Each device sends each message in the queue to all neighbors. Each neighbor adds it to their queue to be processed in the next round if they have not already seen it. The sender's outgoing bandwidth and the receivers' incoming bandwidths are recorded as required. Finally, if \ms{Movement} was set to have devices move, each device moves and their list of neighbors is updated. This loop continues until the desired time in the model is reached.

Every time that a new message, a request for missing messages, a response to missing messages, or a \ms{DS} is transmitted, it is logged to a file. These are the 4 types of communications. Specifically, the unique id of the sender, the unique id of the receiver, the communication type, a unique id of the communication, a unique transmission id to group together broadcasts, the size in bytes of the communication, a timestamp of the communication, and a unique id of the model that it was a part of are logged. Various parameters of the model such as the length of time in the model, the number of devices, the rate at which a \ms{DS} was exchanged, if devices were moving, how devices were distributed, and the broadcasting algorithm are logged at the start of the model to a separate log along with a unique id to join the two types of logs together. These logs are later processed with Pandas for analysis and to build the graphs presented in section \ref{sec:eval}.

\section{Problems with Unicast Alternatives}
\label{unicast_attacks}
A natural idea is to set up ``broadcast bubbles,'' where a message is broadcast to all nodes within a fixed number of hops, on both ends of the unicast path. The sender and receiver each choose a nearby node as a proxy. Then, both broadcast their messages within a fixed range, their proxies pick them up, send them to the other proxy, and then that proxy broadcasts the message within a fixed range. The key concept is that the only thing announced to the network is that the two proxies are communicating, effectively giving nodes an anonymity set of all nodes in their proxy’s broadcast bubble. Unfortunately, this leads to triangulation attacks, where a group of malicious nodes in the broadcast bubble can use timing information to learn about the sender (Figure \ref{fig:triangulation_attack}).

If broadcast bubbles are used, an attacker can learn who sent a message by controlling a few nodes in the anonymity set. If the broadcast messages contain a ttl (broadcast is within a fixed range), the adversary can look at the topology and know that the target is original$\_$ttl - current$\_$ttl hops away. This gives a set of candidates for each node they control. The intersection of this set can, in the worst case, be one node. This reveals the identity of who is using the proxy. A similar attack can be launched by looking at the timestamps of when each node received the message and comparing them.

Another idea that unfortunately does not work is having Bluetooth devices periodically change their addresses. While this may sufficiently hide routing information allowing for more traditional and more efficient routing algorithms for mesh networks to be used, there are cases where users' previous ids can always be linked with their new ids. One such case is when a device only has one neighbor, i.e. someone standing at the very edge of the crowd. This person would see one neighbor with a particular Bluetooth address disappear and then see a new neighbor with a different Bluetooth address appear. These two neighbors are most likely the same person. An adversary could target specific individuals and keep a record of their address changes throughout the protest. They could combine this with the routing information from the protest and deanonymize individuals.

\begin{figure}
\centering
\includegraphics[width=0.4\linewidth]{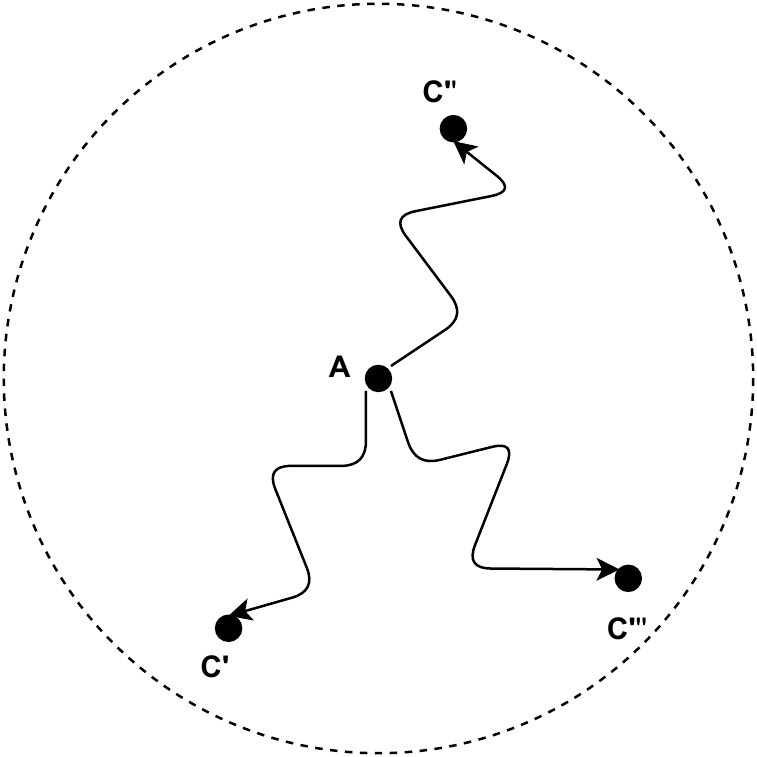}
\caption{The adversary can use timing information from $C^{\prime}$, $C^{\prime\prime}$, and $C^{\prime\prime\prime}$ to learn that $A$ is the device that is broadcasting.}
\label{fig:triangulation_attack}
\vspace{-2.5em}
\end{figure}

\vspace{-0.5em}
\section{Messaging Phase (Abstract)} \label{sec:active_protesting_appendix}

Here we describe the Messaging phase in more detail. Although we make use of a preparation phase to establish relationships between users, the Messaging phase algorithms we introduce are equally applicable to a system that establishes relationships in another way. This phase consists of a control loop that repeats until the phase ends. We also make use of an abstract \ms{Broadcast} algorithm that delivers messages to all users in the system. For security, we require that the broadcast algorithm attempts to deliver messages to all users in the system and does not address messages to any particular users. Pseudocode showing this control loop and related functions is presented in Figure \ref{fig:algorithm_simple}. \ms{PK_s} is the device's public key, \ms{SK_s} is the device's secret key, \ms{PK_r} is the intended receiver's public key, and \ms{PK_{dummy}} is a dummy public key. Each device is independently assigned a uniformly random value indicating when the device is allowed to send a message called \ms{turn}. Devices send messages every \ms{send\_rate} ms from \ms{turn}.

\begin{figure}
	\fbox{
		\begin{minipage}[t]{0.5\textwidth-7\fboxsep-7\fboxrule}
			\input{code/join_simple}
		\end{minipage}
	}
	\fbox{
		\begin{minipage}[t]{0.5\textwidth-7\fboxsep-7\fboxrule}
			\input{code/send_simple}
		\end{minipage}
	}
	\fbox{
		\begin{minipage}[t]{0.5\textwidth-7\fboxsep-7\fboxrule}
			\input{code/receive_simple}
		\end{minipage}
	}
	\fbox{
		\begin{minipage}[t]{0.5\textwidth-7\fboxsep-7\fboxrule}
			\input{code/main_simple}
		\end{minipage}
	}
\caption{Main control loop and related functions.}
\label{fig:algorithm_simple}
\vspace{-1em}
\end{figure}

This control loop consists of checking a series of conditions. \ms{HAS\_SEND} and \ms{MESSAGE\_ARRIVED} are boolean values that are \ms{True} when a device has a message to send and a new message has been received respectively and are \ms{False} otherwise.

First, a device checks if it should send a message by seeing if a multiple of \ms{send\_rate} has passed since \ms{turn}. If it is time to send a message, the device then checks \ms{HAS\_SEND} to see if it should send a real message or a dummy message.

Second, when a message is received, the function \ms{receive} is called. This function tries to decrypt the message. If the decryption succeeds, the function \ms{process} that takes as input a message and displays the message to the user is called.

Third, the device forgets that it has seen all messages that have expired. This is done by removing any that are older than \ms{time\_to\_keep}, which is a constant value indicating the amount of time that a device remembers messages. This is done to prevent the device from having to store data on irrelevant messages that have already been received by all devices.

This completes the description of the control loop. When a real or dummy message is sent, \ms{send} is called and it encrypts and broadcasts the message. In order to hide the identity of the receiver, ensure message integrity, and ensure that users cannot impersonate one another, a secure signcryption must be used. The function \ms{join} is called when a device joins the network and is responsible for initializing \ms{turn}. Note that none of these variables need to be synchonized across devices. In fact, having \ms{turn} synchronized across devices would lead to the worst case traffic patterns (see Section \ref{sec:how_much_bandwidth}).

\section{Signcryption Instantiation}
\label{sec:signcryption}
For completeness, we provide an instantiation of the encryption and decryption functions used for signcryption (Figure \ref{fig:signcryption}). The device prepends its PK to the message and signs it. This allows the receiver to know who sent the message and verify their identity. Then the device encrypts the concatenation of the signature and message by generating a symmetric key, encrypting the symmetric key with ElGamal so that the receiver is not leaked, encrypting the signature and message via AES-GCM with the symmetric key, and placing the public key encryption of the symmetric key in the Associated Data field of AES-GCM. Any signcryption that does not leak the sender or receiver can be used and our example is not the only correct choice.
\begin{figure}
	\fbox{
		\begin{minipage}[t]{0.45\textwidth}
			\input{code/encrypt} 
		\end{minipage}
	}
	\fbox{
		\begin{minipage}[t]{0.45\textwidth}
			\input{code/decrypt} 
		\end{minipage}
	}
\caption{Signcryption Functions.}
\label{fig:signcryption}
\vspace{-2em}
\end{figure}


\section{Public Key Anonymity}
\label{sec:public_key_anonymity}
Bellare~\etal defined Key-Privacy of public key ciphers \cite{anonymity_original}. Key-Privacy (Anonymity) is the property that an adversary who is given a ciphertext $c$ of some message $m$ cannot determine which public key from a known set was used to encrypt $c$. Bellare~\etal's definition uses a one time game, where the adversary submits a pair of public keys along with a message $(pk_0, pk_1, m)$ to a challenger which returns $E(pk_b, m)$ for $b \in\zo$. The adversary then attempts to output the value $b$. Our public key anonymity property extends this and follows from Bellare~\etal's definition by allowing the adversary to submit multiple queries $(pk_0, pk_1, m)_i$ for $i < Q$, where $Q \in \mathbb{Z}$ and $Q < poly(\lambda)$. The adversary then attempts to output the value $b$.

\section{Security Definitions and Proofs}
\label{sec:security_games}
\begin{definition}[Mesh Confidentiality]
For a given protocol $\Pi$, adversary $\calA$, security parameter $\lambda$, integers $N,T,h\in\Z$, and bit $b\in\zo$, we define the mesh confidentiality experiment $\ms{MCONF}[\Pi, \calA, \lambda, N, T, h, b]$. The experiment consists of a challenger $\calC$ who is given $b$ as an input and an adversary $\calA$. It proceeds as follows.

\noindent\textbf{Setup.} In the setup phase, the adversary $\calA$ sends $M$ to the challenger $\calC$, where $M \subset \{1,...,N\}$, $|M| = N - h$ represents the set of malicious users, as well as a set of public keys $pk_i$ for $i \in M$. In response, $\calC$ generates public keys $pk_i$ for honest users $i \in \{1,...,N\} \setminus M $ and sends them to $\calA$.

\noindent\textbf{Messaging.} The messaging phase continues for $T$ rounds. All plaintext messages have a fixed length $\ell$ In each round, the adversary sends the following to the challenger.
\vspace{-1ex}
\begin{myitemize}
    \item $G_i$ which is a graph on N nodes
    \item $\{(\ms{sender}, \ms{rec}_0, m_0, \ms{rec}_1, m_1)_i\}_{i \in [Q]}$, where $Q \in\Z$ is the number of queries sent this round, subject to the restriction that if $\ms{rec}_0\in M$ or $\ms{rec}_1\in M$, then $(\ms{rec}_0, m_0) = (\ms{rec}_1, m_1)$ $\forall i$. If this restriction is violated, the challenger aborts and the experiment outputs 0. Note that $\ms{rec}_0$ and $\ms{rec}_1$ are the public keys of the receivers and not the nodes in $G_i$.
    \item $\{(\ms{pkt}_i, \ms{rec}_i)\}_{i\in M}$, the set of packets (encrypted messages) sent by malicious users and the nodes (not public keys) in $G_i$ to which they are sent. Multiple packets can be sent per round for $i \in M$.
\end{myitemize}
\vspace{-1ex}

In response, the challenger processes $(\ms{sender}, \ms{rec}_b, m_b)$ from the adversary as well as $(m_i, \ms{rec}_i)$ for all $i \in M$. The challenger internally runs all honest parties and responds with all messages from honest users to malicious users that can be delivered given network topology $G_i$ along with all ciphertexts passing through any node corresponding to $i \in M$.

\noindent\textbf{Output.} In the output phase, the adversary outputs a bit $b'\in\zo$, which is also the output of the experiment. 

\noindent For $b = 0,1$, let $W_b$ be the event that $\calA$ outputs 1 in Experiment $b$. We define $\calA$'s advantage with respect to the protocol $\Pi$, security parameter $\lambda$, and integers $N,T,h$ as 
\[\ms{CONFadv}(\calA, \Pi, \lambda, N, T, h) = \Big|Pr[W_0] - Pr[W_1]\Big|.\]
\noindent We say $\Pi$ has \emph{mesh confidentiality} if for all PPT adversaries~$\calA$, and for all $\lambda, N, T, h$, 
\[\ms{CONFadv}(\calA, \Pi, \lambda, N, T, h)  \leq \ms{negl}(\lambda).\]
\end{definition}

\begin{definition}[Mesh Integrity]
For a given protocol $\Pi$, adversary $\calA$, security parameter $\lambda$, and integers $N,T,h\in\Z$, we define the mesh integrity experiment $\ms{MINT}[\Pi, \calA, \lambda, N, T, h]$. The experiment consists of a challenger $\calC$ and an adversary $\calA$. It proceeds as follows.

\noindent\textbf{Setup.} In the setup phase, the adversary $\calA$ sends $M$ to the challenger $\calC$, where $M \subset \{1,...,N\}$, $|M| = N - h$, represents the set of malicious users, as well as a set of public keys $pk_i$ for $i \in M$. In response, $\calC$ generates public keys $pk_i$ for honest users $i \in \{1,...,N\} \setminus M $ and sends them to $\calA$. $\calC$ initializes sets $S$ and $R$ that each hold tuples of the form $(sender, rec, m, ct)$, where $ct$ is the ciphertext corresponding to $(sender, rec, m)$, to keep track of what was sent and what was accepted by honest users respectively.

\noindent\textbf{Messaging.} The messaging phase continues for $T$ rounds. All plaintext messages have a fixed length $\ell$ In each round, the adversary sends the following to the challenger.
\vspace{-1ex}
\begin{myitemize}
    \item $G_i$ which is a graph on N nodes
    \item $\{(\ms{sender}, \ms{rec}, m)_i\}_{i \in [Q]}$, where $Q \in\Z$ is the number of queries sent this round and $\ms{sender}$ is honest. If this condition is violated, the challenger aborts. Note that $\ms{sender}$ and $\ms{rec}$ are the public keys of the sender and receiver, not the nodes in $G_i$.
    \item $\{(\ms{pkt}_i, \ms{rec}_i)\}_{i\in M}$, the set of packets (encrypted messages) sent by malicious users and the nodes (not public keys) in $G_i$ to which they are sent. Multiple packets can be sent per round for $i \in M$.
\end{myitemize}
\vspace{-1ex}

In response, the challenger processes $\{(\ms{sender}, \ms{rec}, m)_i\}_{i \in [Q]}$ from the adversary, producing a $ct$ for each, and adds $(\ms{pk}_\ms{sender}, \ms{rec}, m, ct)$ for each one to $S$ and also processes all of the messages from the malicious users. The challenger internally runs all honest parties and responds with all messages from honest users to malicious users that can be delivered given network topology $G_i$ along with all ciphertexts passing through any node corresponding to $i \in M$. Whenever an honest user $i$ receives a message $ct$, it attempts to decrypt it. If the output is not $\bot$ (i.e., it is a $(\ms{pk}_\ms{sender}, m)$ pair), then $(\ms{pk}_\ms{sender}, i, m, ct)$ is added to $R$ if $\ms{sender}$ is an honest user.

\noindent\textbf{Output.} In the output phase, the experiment outputs $0$ if $R\subseteq S$, $1$ otherwise.

\noindent For $b = 0,1$, let $W_b$ be the event that the challenger outputs 1 after interacting with $\calA$. We define $\calA$'s advantage with respect to the protocol $\Pi$, security parameter $\lambda$, and integers $N,T,h$ as 
\begin{multline*}
\ms{MINTadv}(\calA, \Pi, \lambda, N, T, h) \\
= Pr[\ms{MINT}[\Pi, \calA, \lambda, N, T, h] = 1]
\end{multline*}
\noindent We say $\Pi$ has \emph{mesh integrity} if for all PPT adversaries~$\calA$, and for all $\lambda, N, T, h$,
\[\ms{MINTadv}(\calA, \Pi, \lambda, N, T, h)  \leq \ms{negl}(\lambda).\]

\end{definition}

\begin{theorem}[\ref{thm:int}]
Assuming signcryption scheme $(G,E,D)$ has ciphertext integrity, $\Pi$ has mesh integrity.
\end{theorem}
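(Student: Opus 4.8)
The plan is to prove \Theorem{thm:int} by a reduction: I would show that any PPT adversary $\calA$ winning the mesh integrity experiment $\ms{MINT}$ with non-negligible advantage can be transformed into a PPT adversary $\calB$ that breaks the ciphertext integrity of the signcryption scheme $(G,E,D)$. The starting point is an unpacking of the winning condition $R \not\subseteq S$: the experiment outputs $1$ exactly when some honest receiver $i$ accepts (decrypts to a non-$\bot$ value) a ciphertext $ct$ yielding a pair $(pk_{\ms{sender}}, m)$ with $\ms{sender}$ honest, yet no honest sender ever legitimately produced the tuple $(pk_{\ms{sender}}, i, m, ct)$. Since every ciphertext an honest sender actually emits is recorded in $S$, such a $ct$ is by definition a signcryption ciphertext that verifies as coming from an honest sender but was never output by that sender, i.e., a ciphertext-integrity forgery.

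Concretely, $\calB$ first guesses which honest user $s^*$ is the forged sender in the offending tuple, incurring a $1/h$ loss in advantage (a hybrid over the honest senders would remove the guess, but guessing suffices since $h$ is polynomial). It then plays the ciphertext-integrity game against $s^*$: it receives the challenge key, installs it as $s^*$'s public key, and generates all remaining honest keys itself, including the decryption-side material for $s^*$ since only the signing/integrity component is challenged. Now $\calB$ simulates $\ms{MINT}$ for $\calA$ faithfully; it answers send queries for $s^*$ by calling its signcryption oracle and records the results in $S$, performs all other honest senders' signcryptions and all honest receivers' decryptions directly using the keys it holds, and routes the malicious users' packets through the simulated network exactly as $\calC$ would. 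When a decryption by an honest receiver first adds a tuple to $R$ that is not in $S$ and whose sender is the guessed $s^*$, $\calB$ halts and submits that ciphertext (together with the receiving identity) as its forgery.

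The step I expect to be the main obstacle is arguing \emph{freshness} of the extracted ciphertext: I must show the submitted $ct$ was never returned by the signcryption oracle, so that it counts as a valid forgery. If $ct$ did coincide with an oracle response for some send $(\ms{sender}=s^*, \ms{rec}, m')$, then since $ct \in R$ was decrypted by receiver $i$, either $i = \ms{rec}$ and $m = m'$, in which case the tuple would lie in $S$ and contradict $R \not\subseteq S$, or $i \ne \ms{rec}$, meaning the same ciphertext decrypts successfully under two distinct receiver keys. Ruling out this last case is precisely a robustness property of the anonymous signcryption scheme, and I would either fold it into the ciphertext-integrity assumption or discharge it via the correctness of the ElGamal-plus-AES-GCM instantiation in \Appendix{sec:signcryption}, where an incorrectly-keyed ElGamal decryption yields a wrong AES key and the GCM authentication tag rejects except with negligible probability. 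Granting freshness, $\calB$'s success probability is at least $\frac{1}{h}\,\ms{MINTadv}(\calA, \Pi, \lambda, N, T, h)$ minus a negligible robustness term; since $(G,E,D)$ has ciphertext integrity this quantity is negligible, forcing $\ms{MINTadv}$ itself to be negligible and establishing mesh integrity.
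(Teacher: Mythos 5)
Your overall strategy coincides with the paper's: turn an adversary $\calA$ that wins $\ms{MINT}$ into a forger $\calB$ against the signcryption scheme's ciphertext integrity, using the observation that an entry of $R \setminus S$ is exactly a ciphertext some honest receiver accepts as coming from an honest sender who never produced it. However, there is a genuine gap in your setup. You guess only the forged \emph{sender} $s^*$ (a $1/h$ loss) and let $\calB$ generate every honest receiver's decryption key itself, ``including the decryption-side material for $s^*$,'' on the grounds that only the signing component is challenged. This presupposes a sender-only formulation of ciphertext integrity. The notion the paper assumes (the signcryption ciphertext-integrity game of Boneh--Shoup, invoked in its proof as $\ms{Chal}_\ms{int}$) is defined for a \emph{sender--receiver pair}: the challenger generates both key pairs, the reduction interacts through $S \rightarrow R$ and $X \rightarrow Y$ encryption/decryption queries, and a forgery counts only if it is a fresh ciphertext that the \emph{challenger's} receiver accepts as coming from the challenger's sender. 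In your simulation the offending ciphertext is accepted under a receiver key that $\calB$ itself generated, so it cannot be submitted to $\ms{Chal}_\ms{int}$ as a forgery at all; the reduction never produces anything in the game it is playing. This is precisely why the paper's proof guesses a uniformly random pair $(i,j) \in H^2$, embeds \emph{both} challenge public keys among the honest users, routes every encryption and decryption involving $i$ or $j$ through oracle queries, and outputs the candidate forgery only when the chosen entry of $R \setminus S$ has sender $i$ and receiver $j$ --- paying a $1/h^2$ rather than $1/h$ loss, i.e., $\ms{MINTadv}(\calA,\Pi,\lambda,N,T,h) \leq h^2 \cdot \ms{SINTadv}(\calB,(G,E,D))$.

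A secondary point: your freshness worry is largely an artifact of the same mismatch. Once the forgery is constrained to the challenged pair $(i,j)$, freshness follows from correctness alone: if the extracted $ct$ had been returned by an $S \rightarrow R$ encryption query for some message $m'$, then $(\ms{pk}_i, j, m', ct) \in S$, and since decryption is deterministic the accepted plaintext satisfies $m = m'$, contradicting $(\ms{pk}_i, j, m, ct) \notin S$. The scenario of one ciphertext decrypting under two distinct receiver keys only matters because your reduction checks the forged ciphertext against receiver keys other than the challenger's; and your proposed fix of ``folding robustness into the ciphertext-integrity assumption'' would silently strengthen the hypothesis of the theorem beyond what is stated.
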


\begin{proof}
The proof of Theorem~\ref{thm:int} proceeds by a check and guess argument. Throughout the proof, let $H$ be the set of honest users, i.e., $H = \{1,...,N\} \setminus M$. WLOG, we renumber the users so that the first $h$ are honest.

 Assume adversary $\calA$ exists that has a non-negligible advantage in $\ms{MINT}[\Pi, \calA, \lambda, N, T, h]$. We will construct an adversary $\calB$ that uses $\calA$ to break the ciphertext integrity of $(G,E,D)$. Let $\ms{Chal}_\ms{int}$ be a challenger to the signcryption ciphertext integrity game~\cite{boneh_shoup}. $\calB$ acts as the adversary to $\ms{Chal}_\ms{int}$ and the challenger to $\calA$.
 
 $\calB$ works as follows corresponding to the mesh integrity game's phases:
 
\noindent Setup Phase:
\vspace{-1ex}
\begin{myitemize}
    \item $\calB$ chooses a uniformly random $(i, j) \in H^2$
    \item $\calB$ sends $(i, j)$ to $\ms{Chal}_\ms{int}$ which generates $((sk_i, pk_i),(sk_j, pk_j))$ and sends back $(pk_i, pk_j)$
    \item When $\calA$ sends $M$ and $pk_i$ for $i \in M$, $\calB$ generates $pk_i$ for $i \in H \setminus \{i, j\}$
    \item $\calB$ sends $pk_i$ for $i \in H$ to $\calA$
\end{myitemize}
\vspace{-1ex}

\noindent Messaging Phase:
\vspace{-1ex}
\begin{myitemize}
    \item When When $\calA$ sends $G_i$, $\calB$ stores it for processing messages in the current round.
    \item When $\calA$ sends $(\ms{sender}, \ms{rec}, m)$, $\calB$ encrypts it and adds it along with its corresponding $ct$ to $S$. If $(\ms{sender}, \ms{rec}) = (i, j)$, $\calB$ makes an $S \rightarrow R$ encryption query to $\ms{Chal}_\ms{int}$ which responds with the corresponding ciphertext $ct$. If $(\ms{sender}, \ms{rec}) \neq (i, j)$ but $sender \in \{i, j\}$, $\calB$ makes an $X \rightarrow Y$ encryption query to $\ms{Chal}_\ms{int}$ which responds with the corresponding ciphertext $ct$. $ct$ is used while processing $A$'s query and the honest nodes.
    \item When $\calB$ delivers messages $ct$ from $(\ms{sender}, \ms{rec}, m)$ queries, to the node indicated by $\ms{rec}$, $\calB$ adds $(\ms{sender}, \ms{rec}, m, ct)$ to $R$ if it decrypts correctly. $\calB$ checks this by sending $ct$ to $\ms{Chal}_\ms{int}$ in an $X \rightarrow Y$ decryption query and checks if the response is $\bot$.
    \item When $\calA$ sends $(\ms{pkt}, \ms{rec})$ and $\calB$ is processing each honest node receiving it, if the honest node is $i$ or $j$, $\calB$ makes 2 $X \rightarrow Y$ decryption queries to $\ms{Chal}_\ms{int}$. The first where $i$ is the sender and $j$ is the receiver and the second with $i$ and $j$'s roles reversed. If either plaintext responses denoted by $m$ from $\ms{Chal}_\ms{int}$ are not $\bot$, then this is added to $R$ for $i$ and $j$ in their respective roles of the successful $X \rightarrow Y$ decryption query. In other words, $\calB$ adds $(\ms{sender}, k, m, \ms{pkt})$ to $R$, where $k \in \{i,j\}$ is the honest node being processed.
\end{myitemize}
\vspace{-1ex}

\noindent Output Phase:
\vspace{-1ex}
\begin{myitemize}
    \item $\calB$ outputs $0$ if $R\subseteq S$, $1$ otherwise in the role of $\calA$'s challenger.
    \item If $R\not\subseteq S$ ($\calA$ has successfully forged a message), $\calB$ selects a random entry of $R$ that is not in $S$. If $sender = i$ and $rec = j$, $\calB$ sends the corresponding $ct$ to $\ms{Chal}_\ms{int}$ as the candidate forgery. If $sender \neq i$ or $rec \neq j$, $\calB$ fails in the ciphertext integrity game.
\end{myitemize}
\vspace{-1ex}

When $\calA$ successfully wins the mesh integrity game, $\calB$ guesses the corresponding $i$ and $j$ to $\calA$'s forgery with probability $1/h^2$. Therefore,
\small
\begin{align*}
\noindent&\frac{1}{h^2} \ms{MINTadv}(\calA, \Pi, \lambda, N, T, h) \leq \ms{SINTadv}(\calB, (G,E,D)),
\end{align*}
\normalsize
so we have that 
\begin{multline*}
    \ms{MINTadv}(\calA, \Pi, \lambda, N, T, h) \\
    \leq h^2 \ms{SINTadv}(\calB, (G,E,D)) < negl(\lambda)
\end{multline*}

\noindent$\ms{MINTadv}(\calA, \Pi, \lambda, N, T, h)$ must be negligible, which is a contradiction. Therefore, $\Pi$ has mesh integrity.
\end{proof}

\begin{theorem}[\ref{thm:conf}]
Assuming signcryption scheme $(G,E,D)$ is a CPA secure cipher and provides public key anonymity, then $\Pi$ has mesh confidentiality.
\end{theorem}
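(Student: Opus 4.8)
The plan is to mirror the structure of the integrity proof: assume a PPT distinguisher $\calA$ against mesh confidentiality and build a reduction $\calB$ that breaks either the CPA security or the public-key anonymity of $(G,E,D)$. The heart of the argument is an \emph{obliviousness} observation: the only place the challenge bit $b$ enters the experiment is in the ciphertexts $E(pk_{\ms{rec}_b},m_b)$ produced for the honest senders. I would first argue that $\calA$'s entire view is a PPT-computable deterministic function of (i) the adversary-supplied topologies $G_1,\dots,G_T$ and malicious packets, (ii) the honest parties' protocol randomness (the $\ms{turn}$ values and $\ms{DS}$-exchange schedule, identically distributed in both worlds and independent of $b$), and (iii) the multiset of ciphertexts present in the network, treated as opaque fixed-length strings. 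Three facts make this work: all plaintexts are padded to the common length $\ell$, so real and dummy ciphertexts have identical length; the smart-broadcasting logic (hashing into the $\ms{DS}$, forming requests via $\ms{diff}$, responding, and expiring entries) manipulates ciphertexts only as opaque blobs, so its timing depends on $\ms{turn}$ and topology but never on plaintext content; and an honest receiver stores and forwards every ciphertext regardless of whether decryption succeeds, so its externally observable behavior is independent of decryption outcomes (exactly the constant-time requirement flagged in Section~\ref{sec:system-description}). Consequently a simulator can reproduce $\calA$'s view from the challenge ciphertexts alone, \emph{without any honest secret key}.

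This reduces mesh confidentiality to the claim that the ciphertext tuples $\{E(pk_{\ms{rec}_{0,i}},m_{0,i})\}_i$ and $\{E(pk_{\ms{rec}_{1,i}},m_{1,i})\}_i$ are computationally indistinguishable. The query restriction already forces these to coincide whenever a recipient is malicious, so only queries with \emph{both} recipients honest can differ, and for those $\calB$ never needs the relevant secret key. I would then run a standard hybrid over the $L\le QT$ switched ciphertexts: let $H_k$ use the world-$1$ value for the first $k$ such ciphertexts and the world-$0$ value for the rest, so that $H_0$ and $H_L$ are the two confidentiality experiments.

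To bound $\big|\Pr[H_{k-1}=1]-\Pr[H_k=1]\big|$, where the $k$-th ciphertext switches from $E(pk_{\ms{rec}_0},m_0)$ to $E(pk_{\ms{rec}_1},m_1)$, I would interpose the intermediate value $E(pk_{\ms{rec}_0},m_1)$. The step $E(pk_{\ms{rec}_0},m_0)\appc E(pk_{\ms{rec}_0},m_1)$ follows from CPA security (same key, swapped message), and the step $E(pk_{\ms{rec}_0},m_1)\appc E(pk_{\ms{rec}_1},m_1)$ from the public-key anonymity of Appendix~\ref{sec:public_key_anonymity} (same message, swapped key). In each reduction $\calB$ embeds the external challenge ciphertext as the $k$-th network ciphertext, generates all remaining honest keys itself, simulates the protocol using the obliviousness argument above, and outputs $\calA$'s guess; since $\ms{rec}_0,\ms{rec}_1$ are honest and honest decryption is unobservable, not knowing their secret keys is harmless. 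Summing over the hybrids gives $\ms{CONFadv}(\calA,\Pi,\lambda,N,T,h)\le L\,\big(\ms{CPAadv}(\calB)+\ms{Anonadv}(\calB)\big)$, which is negligible.

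The main obstacle is making the obliviousness claim airtight. I do \emph{not} want to argue that the traffic is bit-for-bit identical across the two worlds — it is not, since distinct ciphertexts hash to different $\ms{DS}$ indices and may collide differently, perturbing the request/response pattern. The correct and weaker statement is that every such downstream difference is a deterministic function of the ciphertexts, hence faithfully reproduced by the simulator running the real protocol on whatever challenge ciphertext it is handed. Pinning down that the simulator needs no honest secret key — in particular that the challenger's responses (ciphertexts crossing malicious nodes, and messages delivered to malicious recipients, which those recipients decrypt with their own adversary-chosen keys) never require decrypting under an honest key — is the delicate part, and the place where the constant-time decryption assumption is load-bearing.
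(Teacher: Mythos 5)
Your proof is correct, and it rests on the same two pillars as the paper's own argument: (i) the obliviousness of the protocol's observable traffic to plaintext contents and intended recipients, so reductions never need an honest secret key, and (ii) a hybrid argument that uses CPA security to change plaintexts and public-key anonymity to change recipients. The decomposition, however, is genuinely different. The paper runs three \emph{global} hybrids: $\ms{Hyb}_1$ replaces every honest-to-honest plaintext (both $m_0$ and $m_1$) by $0^\ell$ using CPA security, with an inner hybrid over the $h$ honest users; $\ms{Hyb}_2$ then switches all recipients from $\ms{rec}_0$ to $\ms{rec}_1$ in a single reduction to the multi-query anonymity definition of Appendix~\ref{sec:public_key_anonymity}, which works precisely because at that point every anonymity query carries the same fixed message $0^\ell$; $\ms{Hyb}_3$ un-zeroes the plaintexts. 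Your route instead hybridizes over the $L \le QT$ individual switched ciphertexts and passes through the intermediate value $E(pk_{\ms{rec}_0}, m_1)$, so each step invokes only a single-challenge CPA game and a single-pair anonymity game. What the paper's route buys: the anonymity step is a single reduction matching its multi-query definition, and the CPA loss tracks the number of honest users $h$ rather than the number of queries. What your route buys: it needs only the weakest single-query forms of both assumptions, avoids the all-zeros detour entirely, and gives the explicit bound $L\,(\ms{CPAadv}+\ms{Anonadv})$. You also deserve credit for treating the obliviousness claim with more care than the paper does: the paper compresses it into one sentence inside its anonymity lemma (``changing the recipient of a message does not change which nodes receive ciphertexts in any round''), whereas you correctly observe that the traffic is \emph{not} bit-for-bit identical across worlds (e.g., different ciphertexts hash to different $\ms{DS}$ indices) and that the right statement is that the whole view is a deterministic function of the opaque ciphertexts, reproducible by a simulator holding no honest secret key --- which is indeed the load-bearing and most delicate part of both proofs.
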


\begin{proof}
The proof of Theorem~\ref{thm:conf} proceeds by a series of hybrids. Throughout the proof, let $H$ be the set of honest users, i.e., $H = \{1,...,N\} \setminus M$. 

\noindent$\ms{Hyb}_0$: This is the mesh confidentiality experiment $\ms{MCONF}[\calA, \Pi, \lambda, N, T, h, 0]$. 

\noindent$\ms{Hyb}_1$: We modify the behavior of the challenger such that when it receives a message of the form $(\ms{sender}, \ms{rec}_0, m_0, \ms{rec}_1, m_1)$, if $\ms{sender}\in H$, $\ms{rec}_0\in H$, and $\ms{rec}_1\in H$, it replaces the message with $(\ms{sender}, \ms{rec}_0, m_0'=0^\ell, \ms{rec}_1, m_1' = 0^\ell)$ before processing it. Otherwise, $m_0'=m_0$ and $m_1'=m_1$. So, plaintext messages sent between honest users are replaced with encryptions of 0, regardless of whether their ciphertexts are given to malicious users along the way. 


In Lemma~\ref{lemma:hyb01}, we prove that this hybrid is indistinguishable from the preceding one by the CPA security of cipher $(E,D)$. 

\noindent$\ms{Hyb}_2$: We modify the behavior of the challenger such that when it receives a message of the form $(\ms{sender}, \ms{rec}_0, m_0, \ms{rec}_1, m_1)$, after it computes $m_0',m_1'$, it processes the message $(\ms{sender}, m_1', \ms{rec}_1)$. That is, it switches the message processed from being the message sent to $\ms{rec}_0$ to being the message sent to $\ms{rec}_1$. 


In Lemma~\ref{lemma:hyb12}, we prove that this hybrid is indistinguishable from the preceding one by the anonymity of cipher $(E,D)$. 

\noindent$\ms{Hyb}_3$: We reverse the changes made in $\ms{Hyb}_1$ by not changing messages from the adversary to $(\ms{sender}, \ms{rec}_0, m_0', \ms{rec}_1, m_1')$ before processing them. That is, the messages sent between honest clients are the actual messages requested by the adversary. 


In Lemma~\ref{lemma:hyb23}, we prove that this hybrid is indistinguishable from the preceding one by the CPA security of cipher $(E,D)$. Note that this hybrid is identical to the mesh confidentiality experiment $\ms{MCONF}[\calA, \Pi, \lambda, N, T, h, 1]$. 


We now show that each of the preceding hybrids are computationally indistinguishable. The proof of the theorem follows from the triangle inequality and the lemmas below. 

\end{proof}

\begin{lemma}\label{lemma:hyb01}
Assuming that $(E,D)$ is a CPA-secure cipher, $\ms{Hyb}_0$ is computationally indistinguishable from $\ms{Hyb_1}$. 
\end{lemma}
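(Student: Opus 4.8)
The plan is to prove the lemma by a reduction to the CPA security of $(E,D)$, using a standard hybrid over the individual ciphertexts affected by the modification. The only difference between $\ms{Hyb}_0$ and $\ms{Hyb}_1$ is that, for each query $(\ms{sender}, \ms{rec}_0, m_0, \ms{rec}_1, m_1)$ in which $\ms{sender}, \ms{rec}_0, \ms{rec}_1 \in H$, the plaintext actually signcrypted to $\ms{rec}_0$ is switched from $m_0$ to $0^\ell$. Since $\ms{rec}_0$ is honest, the resulting ciphertext is only ever \emph{observed} by $\calA$ as it transits malicious nodes along the broadcast; it is never handed to $\calA$ as a decrypted plaintext. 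This is exactly the setting CPA security addresses, so the content change should be undetectable.

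First I would interpolate between the two experiments. Define a sequence $\ms{Hyb}_0 = \ms{H}^{(0)}, \ms{H}^{(1)}, \ldots, \ms{H}^{(K)} = \ms{Hyb}_1$, where $K$ is the total number of honest-to-honest queries across all $T$ rounds. In $\ms{H}^{(k)}$ the first $k$ such queries signcrypt $0^\ell$ to $\ms{rec}_0$, while the remaining ones signcrypt the real $m_0$. Because $T$ is a parameter and the per-round query count $Q$ is polynomially bounded, $K = \poly(\lambda)$, so it suffices to bound the distinguishing advantage between adjacent hybrids and sum.

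Next, for the step from $\ms{H}^{(k-1)}$ to $\ms{H}^{(k)}$, I would build a CPA adversary $\calB$ that internally simulates the entire mesh confidentiality experiment for $\calA$. The adversary $\calB$ generates every honest key itself, except that it plants its CPA challenge public key as the encryption key of the recipient $r = \ms{rec}_0$ of the $k$-th honest-to-honest query. It answers the first $k-1$ such queries with signcryptions of $0^\ell$, the later ones with signcryptions of the true $m_0$, and for the $k$-th query it obtains the challenge ciphertext from its CPA oracle on input $(m_0, 0^\ell)$ addressed to $r$ and uses it in place of the real signcryption (forming the honest sender's signature itself, since that sender's key is one of the honest keys $\calB$ holds). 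Every other part of the experiment — the honest dummy traffic, delivery according to $G_i$, and all ciphertexts routed through malicious nodes — is reproduced exactly. When the CPA bit is $0$ the view is $\ms{H}^{(k-1)}$ and when it is $1$ it is $\ms{H}^{(k)}$, so having $\calB$ echo $\calA$'s output makes $\calB$'s CPA advantage equal to $\calA$'s distinguishing advantage at this step.

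The main obstacle, and the point needing the most care, is justifying that $\calB$ can run the honest recipient $r$ without the secret key matching the planted challenge public key. This rests on the design invariant that an honest node's externally observable behavior — its cover-traffic schedule, its \ms{DS} exchanges, and its responses to neighbors — is independent of whether an incoming ciphertext decrypts to a real message, a dummy, or $\bot$. Thus $\calB$ can have $r$ simply \emph{consume} the challenge (and every other) ciphertext addressed to it without ever decrypting, and nothing in $\calA$'s view depends on that decryption; $\calB$ only ever needs to \emph{encrypt} to $r$, for which the public key alone suffices. Summing over the $K$ hybrids then bounds the distinguishing advantage between $\ms{Hyb}_0$ and $\ms{Hyb}_1$ by $K \cdot \ms{CPAadv}(\calB, (E,D)) = \negl(\lambda)$, completing the proof.
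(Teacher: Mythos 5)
Your overall reduction idea matches the paper's: replace honest-to-honest plaintexts by $0^\ell$ one step at a time, plant the CPA challenge public key at an honest recipient, and exploit the fact that an honest node's externally observable behavior never depends on the outcome of decryption, so the reduction can run that node without its secret key. That last invariant is exactly the point the paper also leans on ("$\calB$ does not attempt decryption \dots this is hidden from $\calA$").

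However, there is a genuine gap in how you set up the hybrids. You interpolate over the $K$ individual honest-to-honest \emph{queries}, and for the step $\ms{H}^{(k-1)} \to \ms{H}^{(k)}$ you have $\calB$ "plant its CPA challenge public key as the encryption key of the recipient $r = \ms{rec}_0$ of the $k$-th honest-to-honest query." But all honest public keys must be generated and handed to $\calA$ during the setup phase, whereas the identity of the recipient of the $k$-th query is chosen \emph{adaptively} by $\calA$ after it has seen those keys. At setup time $\calB$ has no way to know which user to embed the challenge key into, so the reduction as described cannot be executed. The standard repairs are: (i) have $\calB$ guess the recipient of the $k$-th query uniformly among the $h$ honest users and output a random bit if the guess turns out wrong, which degrades the bound to $K \cdot h \cdot \ms{CPAadv}(\calB,(E,D))$ but remains negligible; or (ii) restructure the interpolation as the paper does --- hybridize over honest \emph{users} $\gamma \in \{1,\dots,h\}$ rather than queries, plant the challenge key as $pk_\gamma$ at setup (no adaptivity issue), and use a many-query (left-right oracle) formulation of CPA security so that \emph{every} encryption to user $\gamma$ is answered by the challenger with either the real message or $0^\ell$. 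The paper's per-user decomposition is precisely what avoids the key-planting circularity, at the cost of relying on multi-query CPA security, which is itself equivalent to the single-query notion by a routine argument. With either fix your argument goes through; without one, the central reduction step fails.
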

\begin{proof}
To prove the lemma, we proceed via a sequence of inner hybrid experiments and WLOG, renumber the users so that the first $h$ are honest. For $\gamma \in [H]$, we define the hybrid experiments $\ms{Hyb}_{0, \gamma}$ as follows:
\begin{myitemize}
    \item The challenger proceeds through the setup phase identically to the mesh confidentiality experiment $\ms{MCONF}[\calA, \Pi, \lambda, N, T, h, 0]$.
    \item The challenger proceeds through the messaging phase identically to $\ms{MCONF}[\calA, \Pi, \lambda, N, T, h, 0]$, except when processing $\calA$'s queries, if $sender \in H $ and $rec_j \leq \gamma$, where $j \in\zo$, the query $(\ms{sender}, \ms{rec}_0, m_0, \ms{rec}_1, m_1)$ has $m_j$ replaced with $0^\ell$.
    \item The challenger proceeds through the output phase identically to $\ms{MCONF}[\calA, \Pi, \lambda, N, T, h, 0]$.
\end{myitemize}

\noindent By definition, experiment $\ms{Hyb}_{0, 0}$ corresponds to experiment $\ms{MCONF}[\calA, \Pi, \lambda, N, T, h, 0]$ and experiment $\ms{Hyb}_{0, h}$ corresponds to experiment $\ms{Hyb}_{1}$.

\noindent We now show that each consecutive hybrid experiments $\ms{Hyb}_{0, \gamma - 1}$ and $\ms{Hyb}_{0, \gamma}$, for $\gamma \in [H]$, are computationally indistinguishable.

\noindent Assume adversary $\calA$ exists that can distinguish between $\ms{Hyb}_{0, \gamma - 1}$ and $\ms{Hyb}_{0, \gamma}$. We will construct an adversary $\calB$ that uses $\calA$ to break CPA security. Let $Chal_{cpa}$ be a challenger to the CPA security game. $\calB$ acts as the adversary to $Chal_{cpa}$ and the challenger to $\calA$.

\noindent $\calB$ receives $pk_{\gamma}$ from $Chal_{cpa}$.

\noindent $\calB$ proceeds through the setup phase identically to $\ms{Hyb}_{0, \gamma - 1}$, except when generating the set of public keys, $\calB$ uses $pk_{\gamma}$ as $\gamma$'s public key.

\noindent $\calB$ proceeds through the messaging phase identically to $\ms{Hyb}_{0, \gamma - 1}$, except whenever it needs to encrypt a message $m$ to receiving user $\gamma$, it sends $(\ms{msg}_0 = m, \ms{msg}_1 = 0^\ell)$ to $Chal_{cpa}$. $Chal_{cpa}$ computes $c \leftarrow E(pk, \ms{msg}_b)$ and sends $c$ to $\calB$. $\calB$ uses $c$ for the message to $\gamma$. Whenever $\calB$ needs to decrypt a message for $\gamma$, $\calB$ it does not attempt decryption. Note that this step is hidden from $\calA$ and does not affect the messages it receives from the challenger.

\noindent In the output phase, $\calB$ passes along $\calA$'s output to $Chal_{cpa}$ as its own output.

\noindent $\calB$ provides a perfect simulation of $\ms{Hyb}_{0, \gamma - 1}$ if it uses $\calA$'s messages in the queries from $\calA$ (this is the case when $b=0$) and a perfect simulation of $\ms{Hyb}_{0, \gamma}$ if it uses $0^\ell$ in the queries from $\calA$ (this is the case when $b=1$). Therefore, with the same distinguishing advantage of the two experiments by $\calA$, $\calB$ breaks the CPA security of $(E, D)$. Since $(E, D)$ is a CPA secure cipher, this is a contradiction.

\noindent Since $h$ is a fixed constant less than or equal to $N$, the indistinguishability of the subsequent inner hybrids proves Lemma \ref{lemma:hyb01}.

\end{proof}

\begin{lemma}\label{lemma:hyb12}
Assuming that cipher $(E,D)$ provides anonymity, $\ms{Hyb}_1$ is computationally indistinguishable from $\ms{Hyb_2}$. 
\end{lemma}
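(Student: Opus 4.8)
The plan is to reduce to the (multi-query) anonymity property of the signcryption, mirroring the role that CPA security plays in Lemma~\ref{lemma:hyb01}. First I would pin down the single place where $\ms{Hyb}_1$ and $\ms{Hyb}_2$ can differ. After the $\ms{Hyb}_1$ modification, every challenge query $(\ms{sender}, \ms{rec}_0, m_0, \ms{rec}_1, m_1)$ with $\ms{sender}, \ms{rec}_0, \ms{rec}_1 \in H$ is processed as an encryption of the all-zero plaintext $0^\ell$; and the game's restriction forces $(\ms{rec}_0, m_0) = (\ms{rec}_1, m_1)$ whenever either receiver lies in $M$, so those queries are processed identically in both hybrids. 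Consequently the two experiments differ only on honest-sender/honest-receiver queries, where $\ms{Hyb}_1$ produces $E(pk_{\ms{rec}_0}, 0^\ell)$ and $\ms{Hyb}_2$ produces $E(pk_{\ms{rec}_1}, 0^\ell)$ --- the \emph{same} plaintext encrypted under two different receiver keys. This is precisely the distinction that key-privacy is designed to hide.

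Next I would build a reduction $\calB$ that runs $\calA$ and attacks a multi-query anonymity challenger $\ms{Chal}_\ms{anon}$, whose hidden bit I call $b$. $\calB$ adopts the honest key pairs managed by $\ms{Chal}_\ms{anon}$ as the honest users' keys and otherwise simulates the confidentiality experiment for $\calA$ exactly as in $\ms{Hyb}_1$. Whenever it must encrypt the challenge message for an all-honest query, it issues the anonymity query $(pk_{\ms{rec}_0}, pk_{\ms{rec}_1}, 0^\ell)$, receives $E(pk_b, 0^\ell)$, and splices that ciphertext into the simulated transcript; all remaining ciphertexts (the malicious-receiver challenge queries, which are identical in both worlds, and the malicious users' own packets) are produced through the game's ordinary encryption interface, which also performs the sender's signature, just as $\ms{Chal}_\ms{int}$ does in the proof of Theorem~\ref{thm:int}. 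Finally $\calB$ echoes $\calA$'s output bit. When $b=0$ the transcript is distributed exactly as in $\ms{Hyb}_1$ and when $b=1$ exactly as in $\ms{Hyb}_2$, so $\calB$'s anonymity advantage equals $\calA$'s distinguishing advantage, which must therefore be negligible.

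The step I expect to require the most care is arguing that $\calB$ can faithfully carry out the simulation without ever holding the honest secret keys that $\ms{Chal}_\ms{anon}$ generated. This rests on the same structural observation already exploited in Lemma~\ref{lemma:hyb01}: an honest device's \emph{observable} behavior --- when it emits cover or real traffic, its $\ms{DS}$ exchanges, and its responses to missing-message requests --- is entirely independent of the outcome of decrypting any incoming ciphertext, since decryption results are delivered only internally to the user and never influence what is broadcast. Hence $\calB$ never needs to decrypt on behalf of an honest receiver, and lack of the secret keys is not an obstacle. The only remaining subtlety is consistency of the hidden bit: because the paper's anonymity notion is stated in multi-query form with a single bit $b$ shared across all $Q \le \poly(\lambda)$ queries, every honest-to-honest encryption across all $T$ rounds can be routed through one anonymity challenger, so no inner hybrid over individual queries (of the kind needed in Lemma~\ref{lemma:hyb01}) is required and the reduction is direct.
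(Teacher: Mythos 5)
Your proposal is correct and follows essentially the same route as the paper: a direct reduction to the multi-query public-key anonymity game in which $\calB$ adopts the challenger's honest keys, forwards $(pk_{\ms{rec}_0}, pk_{\ms{rec}_1}, 0^\ell)$ for each all-honest query, splices the returned ciphertext into an otherwise faithful simulation of $\ms{Hyb}_1$, and echoes $\calA$'s bit, so that $b=0$ yields $\ms{Hyb}_1$ and $b=1$ yields $\ms{Hyb}_2$. Your explicit observations --- that the restriction on malicious receivers forces the two worlds to agree outside honest-to-honest queries, and that honest devices' observable behavior is independent of decryption outcomes so no honest secret keys are needed --- are the same points the paper relies on (the latter stated there in Lemma~\ref{lemma:hyb01} and via the remark that changing the recipient does not change which nodes receive ciphertexts).
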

\begin{proof}
Assume adversary $\calA$ exists that can distinguish between $\ms{Hyb}_{1}$ and $\ms{Hyb}_{2}$. We will construct an adversary $\calB$ that uses $\calA$ to break public key anonymity. Let $\ms{Chal}_\ms{anon}$ be a challenger to the anonymity security game. $\calB$ acts as the adversary to $\ms{Chal}_\ms{anon}$ and the challenger to $\calA$.

\noindent $\calB$ proceeds through the setup phase identically to $\ms{Hyb}_{1}$ but receives $pk_i \forall i \in H$.

\noindent $\calB$ proceeds through the messaging phase identically to $\ms{Hyb}_{1}$, except when it receives a query $(\ms{sender}, \ms{rec}_0, m_0, \ms{rec}_1, m_1)$ where $sender \in H$, $rec_0 \in H$, and $rec_1 \in H$. When this occurs, $\calB$ sends $(pk_{rec_0}, pk_{rec_1}, 0^\ell)$ to $\ms{Chal}_\ms{anon}$. All messages satisfying the modification criteria were replaced with $0^\ell$ in $\ms{Hyb}_{1}$, so this is the correct message to use in queries to $\ms{Chal}_\ms{anon}$. $\ms{Chal}_\ms{anon}$ computes $c \leftarrow E(pk_b, m)$ and sends $c$ to $\calB$. $\calB$ uses $c$ for the message represented by $\calA$'s query. Since all honest nodes are sending messages at regular intervals and all ciphertexts will be delivered to all nodes conditioned on $G_i$, changing the recipient of a message does not change which nodes receive ciphertexts in any round.

\noindent In the output phase, $\calB$ passes along $\calA$'s output to $\ms{Chal}_\ms{anon}$ as its own output.

\noindent $\calB$ provides a perfect simulation of $\ms{Hyb}_{1}$ if it uses $rec_0$ in the queries from $\calA$ (this is the case when $b=0$) and a perfect simulation of $\ms{Hyb}_{2}$ if it uses $rec_1$ in the queries from $\calA$ (this is the case when $b=1$). Therefore, with the same distinguishing advantage of the two experiments by $\calA$, $\calB$ breaks the Anonymity security of $(E, D)$.

\noindent Since $(E, D)$ is an anonymous cipher, this is a contradiction and proves Lemma \ref{lemma:hyb12}.

\end{proof}

\begin{lemma}\label{lemma:hyb23}
Assuming that $(E,D)$ is a CPA-secure cipher, $\ms{Hyb}_2$ is computationally indistinguishable from $\ms{Hyb_2}$. 
\end{lemma}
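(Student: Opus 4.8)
The plan is to recognize that Lemma~\ref{lemma:hyb23} is the mirror image of Lemma~\ref{lemma:hyb01} and can be proved by essentially the same CPA-based inner-hybrid argument, run in reverse. The key observation is that $\ms{Hyb}_2$ and $\ms{Hyb}_3$ differ in only one respect: for an honest-to-honest query $(\ms{sender}, \ms{rec}_0, m_0, \ms{rec}_1, m_1)$ with $\ms{sender}, \ms{rec}_0, \ms{rec}_1 \in H$, the challenger processes a ciphertext addressed to $\ms{rec}_1$ in \emph{both} hybrids (this switch of recipient was already performed in $\ms{Hyb}_2$ and is untouched here), but the encrypted plaintext is $0^\ell$ in $\ms{Hyb}_2$ and the true message $m_1$ in $\ms{Hyb}_3$. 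Thus the situation is identical to Lemma~\ref{lemma:hyb01} except that the ``real'' message is now $m_1$ and the receiver is $\ms{rec}_1$, so undoing the zeroing from $\ms{Hyb}_1$ is again a sequence of single-ciphertext CPA swaps. Since $\ms{Hyb}_3$ is by construction identical to $\ms{MCONF}[\calA, \Pi, \lambda, N, T, h, 1]$, this closes the hybrid chain.

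Concretely, I would first renumber the honest users so that the first $h$ are honest and define inner hybrids $\ms{Hyb}_{2,\gamma}$ for $\gamma \in \{0,\dots,h\}$: in $\ms{Hyb}_{2,\gamma}$, for every honest-to-honest query, the message sent to $\ms{rec}_1$ is encrypted as the true $m_1$ whenever $\ms{rec}_1 \leq \gamma$ and as $0^\ell$ whenever $\ms{rec}_1 > \gamma$, with everything else proceeding as in $\ms{Hyb}_2$. By definition $\ms{Hyb}_{2,0} = \ms{Hyb}_2$ and $\ms{Hyb}_{2,h} = \ms{Hyb}_3$. I would then show consecutive inner hybrids $\ms{Hyb}_{2,\gamma-1}$ and $\ms{Hyb}_{2,\gamma}$ are indistinguishable by reduction to CPA security: the reduction $\calB$ receives $pk_\gamma$ from $Chal_{cpa}$, uses it as honest user $\gamma$'s public key during setup, and proceeds exactly as in $\ms{Hyb}_{2,\gamma-1}$ except that whenever it must encrypt a message to receiver $\gamma$ (which, after the $\ms{Hyb}_2$ switch, is the message $m_1$ destined for $\ms{rec}_1 = \gamma$), it submits the pair $(\ms{msg}_0 = 0^\ell,\, \ms{msg}_1 = m_1)$ to $Chal_{cpa}$ and uses the returned ciphertext $c \leftarrow E(pk_\gamma, \ms{msg}_b)$. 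Then $b = 0$ simulates $\ms{Hyb}_{2,\gamma-1}$ and $b = 1$ simulates $\ms{Hyb}_{2,\gamma}$, so $\calB$ inherits $\calA$'s distinguishing advantage, contradicting CPA security of $(E,D)$. Summing over the $h$ steps gives the lemma.

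The one point requiring care — and the only genuine obstacle — is the same as in Lemma~\ref{lemma:hyb01}: because $\calB$ does not hold $sk_\gamma$, it cannot perform decryptions on behalf of honest node $\gamma$. I would argue this is invisible to $\calA$, since in the mesh confidentiality experiment the challenger's output consists only of ciphertexts delivered to malicious users and ciphertexts passing through adversary-controlled nodes $i \in M$; whether or to what an \emph{honest} node $\gamma$ decrypts an incoming packet is never reflected in $\calA$'s view. Hence $\calB$ can simply skip decryption for $\gamma$ without altering the distribution of messages returned to $\calA$, so the simulation is perfect in each inner step. (I would also note the harmless typo in the statement: the lemma should read ``$\ms{Hyb}_2$ is computationally indistinguishable from $\ms{Hyb}_3$.'')
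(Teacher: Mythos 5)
Your proposal is correct and takes essentially the same approach as the paper: the paper's entire proof of this lemma is the single line ``The proof is identical to the proof of Lemma~\ref{lemma:hyb01},'' and what you have written is precisely that mirrored inner-hybrid CPA reduction (including the same device of having $\calB$ skip decryptions for user $\gamma$, which the paper also uses in Lemma~\ref{lemma:hyb01}). You are also right that the lemma statement contains a typo and should assert indistinguishability of $\ms{Hyb}_2$ from $\ms{Hyb}_3$.
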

\begin{proof}
The proof is identical to the proof of Lemma~\ref{lemma:hyb01}. 
\end{proof}

\end{document}